\documentclass[a4paper,UKenglish,cleveref, autoref, thm-restate]{lipics-v2021}


\usepackage{graphicx}
\usepackage{xcolor}
\definecolor{babyblueeyes}{rgb}{0.63, 0.79, 0.95}
\definecolor{dbabyblueeyes}{rgb}{0.50, 0.69, 0.85}
\definecolor{brightube}{rgb}{0.82, 0.62, 0.91}
\definecolor{salmonpink}{rgb}{1.0, 0.57, 0.57}
\definecolor{thistle}{rgb}{0.85, 0.68, 0.87}
\definecolor{apricot}{rgb}{0.98, 0.81, 0.69}
\usepackage{tikz}
\usetikzlibrary{arrows.meta,bending,positioning,decorations.text}
\usetikzlibrary{calc}

\usetikzlibrary{decorations.pathreplacing,calligraphy}
\usetikzlibrary{tikzmark}
\usepackage{colortbl}
\usepackage{booktabs, multirow, tabularx}
\definecolor{green}{rgb}{0.1,0.1,0.1}
\newcommand{\cone}{\cellcolor{pink}}  
\newcommand{\ctwo}{\cellcolor{babyblueeyes}}  
\newcommand{\cthree}{\cellcolor{thistle}}  
\newcommand{\cfour}{\cellcolor{apricot}}  
\usepackage[ruled]{algorithm2e} 

\SetAlFnt{\small}
\SetAlCapFnt{\small}
\SetAlCapNameFnt{\small}
\SetAlCapHSkip{0pt}
\IncMargin{-\parindent}
\newcommand{\appsymb}{$\bigstar$}

\newcommand{\pavsc}{\textsc{pavsc}}





\bibliographystyle{plainurl}

\title{A Lower Bound for Local Search Proportional Approval Voting} 


\author{Sonja Kraiczy}{Department of Computer Science, University of Oxford, UK}{Sonja.Kraiczy@cs.ox.ac.uk}{https://orcid.org/0009-0005-0670-6201}{The author is grateful to be supported by an EPSRC studentship.}

\author{Edith Elkind}{Department of Computer Science, University of Oxford, UK and Alan Turing Institute, UK}{Edith.Elkind@ox.ac.uk}{https://orcid.org/0000-0001-6718-3436}{This work is supported by AI Programme of The Alan Turing Institute and EPSRC grant EP/X038548/1.}
\authorrunning{S. Kraiczy and E. Elkind} 
\Copyright{Sonja Kraiczy and Edith Elkind} 

\ccsdesc[100]{Theory of computation $\rightarrow$ Theory and algorithms for application domains $\rightarrow$ Algorithmic game theory and mechanism design $\rightarrow$ Exact and approximate computation of equilibria} 

\keywords{Computational Social Choice, Committee Elections, Local Search, Fairness} 

\category{} 

\relatedversion{} 




\nolinenumbers 

\EventEditors{Timothy Chan, Johannes Fischer, John Iacono, and Grzegorz Herman}
\EventNoEds{4}
\EventLongTitle{32nd Annual European Symposium on Algorithms (ESA 2024)}
\EventShortTitle{ESA 2024}
\EventAcronym{ESA}
\EventYear{2024}
\EventDate{September 2--4, 2024}
\EventLocation{Royal Holloway, London, United Kingdom}
\EventLogo{}
\SeriesVolume{308}
\ArticleNo{27}

\begin{document}

\maketitle

\begin{abstract}
Selecting $k$ out of $m$ items based on the preferences of
$n$ heterogeneous agents is a widely studied problem
in algorithmic game theory.
If agents have approval preferences over individual items and harmonic utility
functions over bundles---an agent receives $\sum_{j=1}^t\frac{1}{j}$ utility 
if $t$ of her approved items are selected---then welfare optimisation 
is captured by a voting rule known as Proportional Approval Voting (PAV). PAV also satisfies demanding fairness axioms. However, finding a winning set of items under PAV is NP-hard. In search of a tractable method with strong fairness guarantees, a bounded local search version of PAV was proposed \cite{AE+18}. It proceeds by starting with an arbitrary size-$k$ set $W$ and, at each step, checking if there is a pair of candidates $a\in W$, $b\not\in W$ such that swapping $a$ and $b$ increases the total welfare by at least $\varepsilon$; if yes, it performs the swap. Aziz et al.~show that setting $\varepsilon=\frac{n}{k^2}$ ensures both the desired fairness guarantees and polynomial running time. However, they leave it open whether the algorithm converges in polynomial time if $\varepsilon$ is very small (in particular, if we do not stop until there are no welfare-improving swaps).
We resolve this open question, by showing that if $\varepsilon$ can be arbitrarily small, the running time of this algorithm may be super-polynomial. Specifically, we prove a lower bound of~$\Omega(k^{\log k})$ if improvements are chosen lexicographically. To complement our lower bound, we provide an empirical comparison of two variants of local search---better-response and best-response---on several real-life data sets and a variety of synthetic data sets. Our experiments indicate that, empirically, better response exhibits faster running time than best response.
\end{abstract}
\newpage
\section{Introduction}
\label{sec:intro}

We study the collective decision problem where the goal is to select $k$ out of $m$ items (candidates) based on the preferences of $n$ agents (voters).
This problem (or, more precisely, its generalisation to the setting where different items
may have different costs and there is a budget constraint)
is known as the {\em combinatorial public project} problem in the algorithmic mechanism design literature, where the focus in on optimisation of the utilitarian welfare subject to strategyproofness constraints \cite{buchfuhrer2010computation,dughmi2011truthful}. 
In the computational social choice literature, it is known as the the 
{\em multiwinner voting} problem, and an important concern is proportionality, 
i.e., ensuring that large groups of voters with similar preferences
are fairly represented by the elected candidates~\cite{lackner2023multi}.

Proportional Approval Voting (PAV)~\cite{J16}, which belongs to the class of 
Thiele's methods~\cite{AB+17}, is a multiwinner voting rule that can be viewed both from the welfare maximisation perspective and from the 
proportionality perspective. Given an election with a set of voters $N$, 
a set of candidates $C$, a target committee size $k$, and the voters' approval ballots 
$(A_v)_{v\in N}$, where $A_v\subseteq C$ for all $v\in N$, it computes the {\em PAV score} of a committee $W\subseteq C$ as $\pavsc(W)=\sum_{v\in N}\sum_{j=1}^{|A_v\cap W|}\frac{1}{j}$, 
and outputs all size-$k$ committees with the maximum PAV score.
This rule admits a utilitarian interpretation: agent $v$ with ballot $A_v$ is assumed to compute the utility of bundle $W$ as $\sum_{j=1}^{|A_v\cap W|}\frac{1}{j}$, and the rule maximises the utilitarian welfare. Another interpretation makes no assumptions 
about agents' values for bundles, but aims to achieve proportionality. Intuitively, proportionality means that an $\alpha$-proportion of the population that jointly approves at least $\alpha k$ items should be represented by an $\alpha$-fraction of the $k$ selected items. There are several ways to formalise
this intuition, including justified representation axioms, such as PJR and EJR~\cite{AB+17,SE+17}, or the notion of proportionality degree~\cite{skowron2021proportionality}, and PAV is among the very few voting rules
that satisfy EJR and have optimal proportionality degree.
Unfortunately, however, PAV is known to be NP-hard to compute~\cite{skowron2016finding,AG+15}.
\begin{algorithm}
\caption{$\varepsilon$-local search PAV ($\varepsilon$-ls-PAV)}\label{elspav}
\KwData{$\varepsilon>0$, arbitrary committee $W_\text{init}$ of size $k$, voters' approval ballots $(A_v)_{v\in N}$}
\KwResult{$W$ of size $k$}
$\mathit{W} \gets W_\text{init}$\;
\While{$\exists b\notin W, a \in W \textrm{ such that } \Delta(W,a,b) \ge \varepsilon$}
  {$W\gets (W \cup \{b\})\setminus\{a\}$} 
\Return $W$
\end{algorithm}

Since PAV has excellent proportionality properties, there has been a lot of interest in identifying tractable variants of this rule. Two natural approaches to explore in this context are greedy sequential optimisation and local search. The former is a special case of the greedy algorithm for submodular function maximisation, and approximates the social welfare by a factor of $(1-\frac{1}{e})$~\cite{nemhauser1978analysis}. 
Unfortunately, high PAV score does not imply good proportionality guarantees~\cite{AB+17}, so approximation algorithms do not appear to be very useful from the fairness perspective.

In contrast, the local search approach turns out to be well-suited for identifying fair outcomes. 
The reason for this is that proofs of proportionality guarantees for PAV 
use local swap arguments: they show that if a committee $W$ is proportional, there is no pair of candidates $a\in W, b\notin W$ such that swapping them increases the PAV score, i.e., the quantity
$$
\Delta(W,a,b)=\pavsc((W\setminus\{a\})\cup\{b\})-\pavsc(W)
$$ 
is positive. The local search algorithm that
starts with an arbitrary committee and performs PAV score-improving swaps is therefore a natural heuristic for PAV. It was first introduced and studied by Aziz et al.~\cite{AE+18}. However, Aziz et al.~were unable 
to show that local search converges after polynomially many 
iterations, as a single iteration may only increase the PAV score by a tiny amount.
To overcome this issue, they considered a parameterised version of local search, 
which only performs swaps if they improve the PAV score by at least $\varepsilon$
(see Algorithm~\ref{elspav} for the pseudocode).
They observed that if $\varepsilon$ is sufficiently small, the algorithm, which we will call
$\varepsilon$-ls-PAV, preserves
the proportionality guarantees of PAV, and if $\varepsilon$ is sufficiently large, it converges
in polynomial time; setting $\varepsilon=\frac{n}{k^2}$ achieves both of these objectives.
However, Aziz et al.~left it as an open problem if $\varepsilon$-ls-PAV converges in polynomial time
for {\em all} values of $\varepsilon>0$. The following conjecture is implicit in their work:
\begin{conjecture}[left open by \cite{AE+18}]\label{conj}
For small $\varepsilon$, $\varepsilon$-ls-PAV may make a super-polynomial number of swaps.
\end{conjecture}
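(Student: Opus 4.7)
The plan is to exhibit an explicit family of elections, parameterised by $k$, together with a starting committee $W_{\text{init}}$ and a fixed lexicographic order on candidate pairs, such that $\varepsilon$-ls-PAV with sufficiently small $\varepsilon$ performs $\Omega(k^{\log k})$ swaps before terminating. Because the PAV score is strictly increasing along the local-search trajectory, the committee must traverse at least that many distinct size-$k$ committees; the only realistic way to force such a long path is to encode a \emph{recursive counter} with roughly $\log k$ levels, each with $k$ states, so that advancing the $i$-th level triggers a full cycle through the $k^{i-1}$ states of the lower levels. The lexicographic tie-breaking rule will be used to force the algorithm onto the canonical counter trajectory by always preferring swaps at the lowest currently active level.

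For the construction itself, I would place the $\log k$ counter gadgets on disjoint blocks of candidates and give each gadget its own block of voters whose approval ballots are confined to that gadget, so the overall PAV score decomposes as a sum over gadgets. Within a single gadget, using the standard expression for $\Delta(W,a,b)$ as a sum of harmonic marginals restricted to voters who approve $a$ or $b$, the ballots can be calibrated so that from the gadget's current state there is a unique strictly improving intra-gadget swap: namely, the one that advances the gadget by one step. An additional \emph{linking} set of voters, whose ballots straddle two consecutive levels, is needed to produce the ripple-carry behaviour: when the level-$i$ gadget has completed its cycle, the lex-smallest improving swap becomes a swap that increments level $i+1$, and the design must arrange that this swap simultaneously re-enables a new cycle at level $i$ while still having net $\Delta>0$. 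The initial committee $W_{\text{init}}$ is chosen to place every gadget in its `zero' state.

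The main technical obstacle, typical of local-search lower bounds, is to rule out shortcut swaps that let the algorithm leap ahead in the counter or terminate prematurely. I would handle this in two steps: first, make inter-gadget swaps involving incompatible ballots contribute zero net improvement, so that the only candidate improving swaps are intra-gadget or genuine carry swaps; second, choose the lex order so that the intended trajectory swap is strictly smaller than every other improving swap at every reachable committee, which can be verified by a case analysis on which level is currently active. A secondary obstacle is quantitative: since every committee has PAV score at most $n H_k$, the average per-step improvement along a trajectory of length $k^{\log k}$ is of order $n H_k / k^{\log k}$, so the construction must be robust enough that all $k^{\log k}$ per-step improvements are simultaneously positive and bounded below by a common quantity $\varepsilon^\star$. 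Since the conjecture only asserts the existence of \emph{some} small $\varepsilon$ for which the algorithm is super-polynomial, we are then free to choose any $\varepsilon<\varepsilon^\star$ to complete the proof; the hardest part of the construction will be engineering the gadget ballots so that these minimum improvements are both positive and analysable in closed form.
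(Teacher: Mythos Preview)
Your high-level architecture is correct and matches the paper: a recursive counter with $\Theta(\log k)$ levels, each with $\Theta(k)$ states, linked so that advancing level $i{+}1$ by one step undoes a full cycle of level $i$, together with a lexicographic order that forces the algorithm onto the canonical trajectory. The paper builds exactly this, and also handles the shortcut issue for the lexicographic rule by adding a polynomial number of extra voters who penalise cross-gadget swaps.

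However, there is a genuine gap in your proposal, and it is precisely the part you flag as ``the hardest part'' without resolving. For the ripple-carry to work, a single swap at level $i{+}1$ must remain strictly improving even though it simultaneously destroys $t\approx k$ steps of progress at level $i$; hence the per-step increment at level $i{+}1$ must exceed $t$ times the per-step increment at level $i$. Iterated over $\log k$ levels, the increments at the bottom level must be of order $k^{-\Theta(\log k)}$. But any swap increment in a PAV instance is a signed sum of unit fractions $\sum_i w_i/i$ with integer weights $w_i$ satisfying $\sum_i |w_i|\le n$. Achieving a positive value as small as $k^{-\Theta(\log k)}$ with $n=\mathrm{poly}(k)$ is a nontrivial number-theoretic constraint---it is exactly the optimisation problem the paper isolates in its roadmap---and your proposal gives no mechanism for meeting it. Simply ``calibrating ballots'' within a single gadget cannot produce such increments without a specific recursive design.

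The paper's missing idea is an explicit inductive family of elections $E(j,k)$ on $2^j$ voters in which the swap $(a,b)$ changes the PAV score by exactly $\delta(j,k)=j!\big/\prod_{j'=0}^{j}(k-j')$. This is built by pairing a copy of $F(j{-}1,k)$ with a copy of $F(j{-}1,k{-}1)$ with the roles of $a$ and $b$ reversed, so that the two contributions nearly cancel and leave a residue one order smaller. Taking $j=\Theta(\log k)$ gives increments of order $k^{-\Theta(\log k)}$ with only $\mathrm{poly}(k)$ voters, and taking $j$ to decrease by $2$ per level yields the geometric separation $\delta(j,k)>t\,\delta(j{+}2,k)$ needed for the carry. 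Without this construction (or an equivalent), the counter cannot be realised within a polynomial-size instance, and the argument does not go through.
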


\begin{figure}[h]
\begin{subfigure}{0.48\linewidth}
\centering
\begin{tikzpicture}[scale=0.8]

  \foreach \x in {-2,-1.5,...,2.5}
    \foreach \y in {0,0.5,...,2.5}
      {
        \fill (\x-1.75,\y-0.75) [color=pink] circle (0.15cm);
      }
\foreach \x in {3.5,4,...,5}
    \foreach \y in {0,0.5,...,2.5}
      {
        \fill (\x-1.75,\y-0.75) [color=babyblueeyes] circle (0.15cm);
      }

        \fill (0,0) node[minimum height=2cm,minimum width=6cm,rounded corners=0.25cm] {};
        \fill (5,0) node[minimum height=2cm,minimum width=3cm] {};

        \draw [line width=1mm, decorate, decoration = {calligraphic brace, mirror, amplitude=5mm}] (-4,-1) --  (1,-1);
        \draw [line width=1mm, decorate, decoration = {calligraphic brace, mirror, amplitude=3mm}] (1.5,-1) --  (3.5,-1);
        \node at (-1.5,-2) {60 voters approve};

        \path[fill=salmonpink] (-2.5,-2.5) circle (0.25cm) node[text=black] {$c_1$};
        \path[fill=salmonpink] (-1.5,-2.5) circle (0.25cm) node[text=black] {$c_2$};
        \path[fill=salmonpink] (-0.5,-2.5) circle (0.25cm) node[text=black] {$c_3$};
        \path[fill=dbabyblueeyes] (2,-2.5) circle (0.25cm) node[text=black] {$c_4$};
        \path[fill=dbabyblueeyes] (3,-2.5) circle (0.25cm) node[text=black] {$c_5$};

        \node at (2.5,-2) {30 approve};
\end{tikzpicture}
\subcaption{60 voters approve $c_1,c_2,c_3$\\ and 30 voters approve $c_4,c_5$\\}\label{sixtythirty}
\end{subfigure}
\begin{subfigure}{0.48\linewidth}

\begin{tikzpicture}[scale=0.8]

  \foreach \x in {-2,-1.5,...,2.5}
    \foreach \y in {0,0.5,...,2.5}
      {
        \fill (\x-2.75,\y-0.75) [color=pink] circle (0.15cm);
      }
\foreach \x in {3.5,4,...,5}
    \foreach \y in {0,0.5,...,2.5}
      {
        \fill (\x-2.75,\y-0.75) [color=babyblueeyes] circle (0.15cm);
      }

        \fill (-0.25,1.75) [color=white] circle (0.16cm);
         \fill (-0.25,1.25) [color=white] circle (0.16cm);
        \fill (-0.75,1.75) [color=white] circle (0.16cm);
        \fill (2.25,1.75) [color=white] circle (0.16cm);
        \fill (2.25,1.25) [color=white] circle (0.16cm);
        \fill (-0.75,1.25) [color=white] circle (0.16cm);

   \foreach \y in {0,0.5,...,2.5}
      {
        \fill (3.25,\y-0.75) [color=thistle] circle (0.15cm);
      }
      
        \fill (-1,0) node[minimum height=2cm,minimum width=6cm,rounded corners=0.25cm] {};
        \fill (4,0) node[minimum height=2cm,minimum width=3cm] {};

        \draw [line width=1mm, decorate, decoration = {calligraphic brace, mirror, amplitude=5mm}] (-5,-1) --  (0,-1);
        \draw [line width=1mm, decorate, decoration = {calligraphic brace, mirror, amplitude=3mm}] (0.5,-1) --  (2.5,-1);
        \draw [line width=0.5mm, decorate, decoration = {calligraphic brace, mirror, amplitude=1mm}] (3,-1) --  (3.5,-1);

        \node at (-2.5,-2) {56 voters approve};
        \path[fill=salmonpink] (-3.5,-2.5) circle (0.25cm) node[text=black] {$c_1$};
        \path[fill=salmonpink] (-2.5,-2.5) circle (0.25cm) node[text=black] {$c_2$};
        \path[fill=salmonpink] (-1.5,-2.5) circle (0.25cm) node[text=black] {$c_3$};
        \path[fill=dbabyblueeyes] (1,-2.5) circle (0.25cm) node[text=black] {$c_4$};
        \path[fill=dbabyblueeyes] (2,-2.5) circle (0.25cm) node[text=black] {$c_5$};
        \node at (1.5,-2) {28 approve};
        \node at (3.25,-2) {6};

\end{tikzpicture}
\subcaption{6 voters leave their groups and vote for themselves. 56 voters approve $c_1,c_2,c_3$ and 28 voters approve $c_4,c_5$}\label{noise}
\end{subfigure}
\caption{For $k=3$ and initial committee $W=\{c_1,c_2,c_3\}$ , $\frac{n}{k^2}$-ls-PAV would replace a member of $W$ with one of $c_4,c_5$ in the instance in Figure~\ref{sixtythirty}, but not in the instance in Figure~\ref{noise}.}
\end{figure}
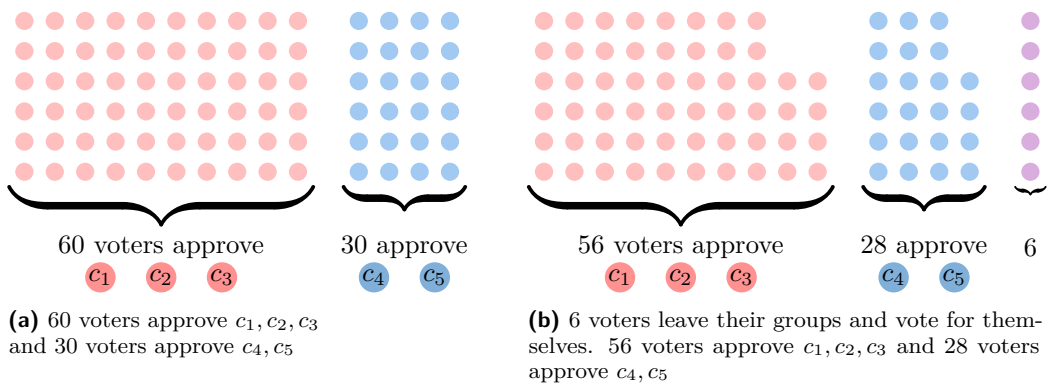
Now, while setting $\varepsilon=\frac{n}{k^2}$ preserves the worst-case proportionality guarantees, the faster running time 
comes at a cost: for large values of $n$ and small values of $k$ the algorithm
may get ``lazy'' and choose not to perform a swap even if it would result
in a much fairer outcome. Indeed, consider an instance with $90$ voters, $60$ of which vote for $\{c_1, c_2, c_3\}$ and $30$ vote for $\{c_4, c_5\}$ (see Figure~\ref{sixtythirty}).
For a committee size of $k=3$, a fair outcome gives two representatives to the group of $60$ and one to the group of $30$. Indeed, one can check that the majoritarian committee $W=\{c_1,c_2,c_3\}$ is not locally optimal, in the sense that 
$\frac{n}{k^2}$-ls-PAV will replace one of the candidates in $W$ with $c_4$ or $c_5$.
However, if the voters' preferences are a bit more diffuse, this is no longer the case.
To see this, consider Figure~\ref{noise}, where the groups of size $60$ and $30$ 
lose four and two members, respectively, and the six breakaway voters 
become candidates and vote for themselves.
For this instance, $W=\{c_1, c_2, c_3\}$ is an output of $10$-ls-PAV, because the PAV scores
of $W$ and $(W\setminus\{c_3\})\cup\{c_4\}$ are, respectively, $56\cdot(1+1/2+1/3)$ and 
$56\cdot(1+1/2)+28$, and $28-\frac{56}{3} < \frac{n}{k^2}=10$. On the other hand, 
for the ``vanilla'' version of ls-PAV, which does not stop until there are no
PAV-score improving swaps, the ``unfair'' committee $W$ is not among the outputs of local search on the modified instance.

There are other reasons to favour ``vanilla'' ls-PAV over $\frac{n}{k^2}$-ls-PAV. For instance, the former rule is easier to explain to voters, who may be disappointed that the latter rule stops despite the availability of score-improving swaps. Also, ``vanilla'' ls-PAV is more decisive---it is easy to see that for $\varepsilon<\varepsilon'$
each output of $\varepsilon$-ls-PAV is an output of $\varepsilon'$-ls-PAV, 
but the converse is not necessarily true---and decisiveness is viewed as a desirable property in the social choice literature. Thus, it is important to understand whether the conjecture of \cite{AE+18} is true, i.e., whether $\varepsilon$-ls-PAV converges in polynomial time even for small values of $\varepsilon$.

\smallskip
Motivated by these considerations, in this work we investigate Conjecture~\ref{conj} 
and resolve it in the positive. Specifically, for adversarially chosen swaps 
we show a lower bound of $\Omega(k^{\log k})$ via a subtle combinatorial construction. We then extend our result to a natural pivoting rule, which chooses swaps lexicographically. 
While our result does not rule out the possibility that the 
outputs of $\varepsilon$-ls-PAV can be computed by a different polynomial-time algorithm (the complexity of this problem remains an interesting open question), it justifies the choice of $\varepsilon=\frac{n}{k^2}$ in the work of Aziz et al.~\cite{AE+18}.

Importantly, while our lower bound holds for better response with the lexicographic pivoting
rule, it does not extend to another natural pivoting rule: choosing the swap
that offers the maximum increase in the PAV score, i.e., best-response dynamics.
Indeed, it remains an open problem to determine if the best-response variant of ls-PAV
may take superpolynomially many steps. Motivated by this question, in the extended version
we provide an empirical comparison between lexicographic better response and best response,
using several real-life and synthetic datasets. We measure the performance of each algorithm on a given instance as the number of candidate swaps it needs to consider before termination (this is a useful proxy for running time as long as we do not have access to parallel processing hardware). Interestingly, on the datasets we investigate, better response considers fewer swaps than best response. Hence, while our theoretical worst-case results seem to favour best response over better response, the empirical results paint the opposite picture. 
\subsection{Roadmap of our Approach}
We study the runtime of $\varepsilon$-ls PAV with the threshold $\varepsilon$ set to a very small positive value (which we will denote by $0^+$). 
Specifically, for each $k\ge 0$ we construct a multiwinner election $E^k$ with target committee size $k$ and the following properties: (1) the number of voters in $E^k$ is polynomial in $k$; (2) on $E^k$, $0^+$-ls-PAV may require $\Omega(k^{\log k})$ steps until convergence. 
Our argument is closely connected to the following simple-to-formulate number-theoretic question:
 \begin{align}
 \text{minimise } &\sum_{i=1}^{k}\frac{w_i}{i}\qquad\text{subject to} \label{eq:sum}\\
 &\sum_{i=1}^{k}\frac{w_i}{i}>0, \qquad
 \text{ where } w_i\in\mathbb{Z} \text{ and } w_i=\mathrm{poly}(k)\label{eq}
 \end{align}
If $w_i$ are not required to be polynomial in $k$, it is not difficult to see that the minimum of the sum in~\eqref{eq:sum} is the inverse of the least common multiple 
of $\{1,2,\ldots, k\}$.
However, in our construction the number $\sum_{i=1}^k|w_i|$ corresponds to the size of the voter set. Hence, if we want the size of $E^k$ to be polynomial in $k$,
the $w_i$ need to be polynomial in $k$ as well.
At the heart of our construction is an instance of size polynomial in $k$ that corresponds to a value of $\frac{1}{k^{\log k}}$ for the above objective.
We use it to build an election with a carefully crafted combinatorial structure on which $0^+$-ls-PAV is forced to repeatedly undo previously achieved progress.
We note that if $w_i\in\{-1,1\}$, then the best known upper bound is due to \cite{BM+18} and matches our construction in Lemma $4.3$. 

\smallskip
As a warm-up, in Section \ref{sec:warmup}, we first prove a simpler result, which illustrates one of the main ideas behind the more complex
construction. We study $\frac{n}{k^2}$-ls-PAV where swaps may be chosen adversarially, 
and show that its worst-case runtime is $\widetilde{\Theta}(k^2)$.
 In Section~\ref{sec:level1} we first construct an election together with a committee where 
 a swap increases the PAV score just by $\Theta(\frac{1}{k^{\log k}})$.
We use this as a building block to construct two further levels of complexity, which give us our desired instance (Sections~\ref{sec:level2} and~\ref{sec:level3}). This instance is used to show our main result for the adversarial setting (Section~\ref{sec:abr}):
In Theorem~\ref{thm:zerolow}, we exhibit a sequence of swaps of length super-polynomial in $k$ that may be performed by $0^+$-ls-PAV. 
Finally, in Section~\ref{sec:nabr} we show that lexicographic better response executes a subsequence of our sequence from Theorem~\ref{thm:zerolow}, and the length of this subsequence is still superpolynomial in $k$.

\subsection{Related Work}

Initially, proportionality in multiwinner 
committee voting was considered from an axiomatic perspective: there is a spectrum of justified representation
axioms, ranging from Justified Representation (JR) (which is rather mild and easy to satisfy) to more demanding axioms such as Proportional, Extended and Full Justified Representation (PJR,  EJR and FJR)~\cite{AB+17,SE+17,PPS21}; as well as EJR+ \cite{BP23};
 $\frac{n}{k^2}$-ls-PAV is among the very few polynomial-time computable voting rules that satisfy EJR.
Subsequently, Skowron~\cite{skowron2021proportionality} pursued a qualitative approach, and put forward the notion of the proportionality degree, which formalises to what degree $\beta$ it holds that an $\alpha$ proportion of the population proposing at least $\alpha k$ candidates will be represented (so, ideally 
$\beta\sim\alpha$ and the group is represented by roughly $\beta k$ candidates on average). PAV exhibits excellent performance according to this measure: for PAV the ratio $\frac{\beta}{\alpha}$ approaches 1. For $k\leq 200$ Skowron shows that Sequential PAV has a proportionality degree of at least $0.7$, but for larger $k$ the proportionality of Sequential PAV remains open.
Other voting rules with good proportionality guarantees are Sequential Phragm{\'e}n's rule (which satisfies PJR) \cite{J16,BF+17} and the Method of Equal Shares (which satisfies the stronger EJR axiom) \cite{PS20}. Unlike PAV, both of these rules are formulated in terms of voters sharing the `load' incurred by the candidates in the committee, and have a proportionality degree of $0.5$. 


\section{Preliminaries}
For $n\in \mathbb{N}$, we write $[n]=\{1,\ldots, n\}$.
An {\em approval election} is a $4$-tuple
$E = (N, C, (A_v)_{v\in N}, k)$, where
$N=[n]$ is a set of {\em voters}, $C$ is  a set of {\em candidates}, $|C|=m$, $A_v\subseteq C$ is the {\em ballot} of voter $v\in N$, and $k\in [m]$ is the {\em target committee size}. Subsets of $C$ (not necessarily of size $k$) are called {\em committees}. 

We define the {\em PAV satisfaction} of a set of voters $V$ from a committee $W$ 
as $\pavsc_V(W)=\sum_{v\in V}\sum_{j=1}^{|A_v\cap W|}\frac{1}{j}$; if $V$ is a singleton, i.e., 
$V=\{v\}$, we omit the braces and write $\pavsc_v$ instead of $\pavsc_{\{v\}}$.
Given a committee $W$, a pair of candidates $b\notin W$, $a\in W$, 
and a set of voters $V$, we denote by
$\Delta_V(W, a, b)$ the change in the PAV-satisfaction of $V$ that is caused by swapping $a$ with $b$:
$$
\Delta_V(W,a,b)=\pavsc_V(W\cup\{b\}\setminus\{a\})-\pavsc_V(W).
$$
For readability, we omit $V$ from the notation when $V=N$, i.e., we write 
$\Delta(W,a,b) := \Delta_N(W,a,b)$.

The ``vanilla'' local search algorithm, which swaps $a\in W$ with $b\not\in W$ as long as 
$\Delta(W, a, b)>0$, can be described as $\varepsilon$-ls-PAV for $\varepsilon\le \min\{\Delta(W, a, b): W\subseteq C, |W|=k, b\not\in W, a\in W, \Delta(W, a, b)>0\}$. It can be shown that this condition can be satisfied by setting $\varepsilon=\frac{1}{\text{lcm}([k])}$, where 
for each $S\subset \mathbb N$ we denote by $\text{lcm}(S)$ the least common multiple of the integers in $S$.
In what follows, we denote this value of $\varepsilon$ by $0^+$.

\smallskip
 Given a committee $W$, we say that a sequence of swaps 
 $$
 \textbf{X}=(a_1,b_1),(a_2,b_2),\ldots, (a_s,b_s)
 $$ 
 is \textit{valid} if for each $t\in [s]$ the committee $W_t=(W_{t-1}\cup\{b_{t}\})\setminus 
 \{a_{t}\}$ (where $W_0:=W$) satisfies 
 $a_{t}\in W_{t-1}$, $b_{t} \notin W_{t-1}$.
The {\em length} of a sequence $\textbf{X}$, denoted by $|\textbf{X}|$, is the number of pairs in $\textbf X$. We define the {\em inverse} (sequence) of $\textbf{X}$ as $\textbf{X}^{-1}=(b_s,a_s),(b_{s-1},a_{s-1}),\ldots, (b_1,a_1)$.
Given two finite sequences of swaps \textbf{X} and \textbf{Y}, we define their {\em concatenation} $\textbf{X}\oplus \textbf{Y}$ as the sequence with prefix \textbf{X} followed by suffix \textbf{Y}.
For our proofs, it will be useful to have an arbitrarily large pool of `dummy' candidates. We therefore define $D_k=\{d_1,\ldots, d_k\}$ so that $D_{k+1}=D_k\cup \{d_{k+1}\}$, $D_0=\varnothing$ and $D=\cup_{k=1}^{\infty}\{D_k\}$.

We omit some proofs due to space constraints; the respective claims are marked with $\bigstar$. All missing proofs and the simulation results appear in the extended version of the paper. 

\section{Warm-up: Lower Bound for $\frac{n}{k^2}$-ls-PAV}\label{sec:warmup}
To showcase the ideas behind our main lower bound (Theorem~\ref{thm:nabr}), we start by presenting a lower bound on $\frac{n}{k^2}$-ls-PAV in the adversarial setting. Specifically, for each $n$-voter election and committee size $k$, we consider a directed graph whose vertices are committees, and whose edges are pairs $(W,W')$ such that $W'$ can be obtained from $W$ via a swap and the PAV score of $W'$ is at least $\frac{n}{k^2}$ higher than that of $W$. We exhibit an election for which this graph contains a path of length $\Omega(k^2)$. We call this setting adversarial, because these swaps may be suggested by an adversary whose aim is to maximise the number of iterations. 

This result establishes
that the upper bound on the number of iterations of the algorithm of \cite{AE+18} is tight up to a $\log k$ factor. 
Indeed, since the maximum PAV score
of a size-$k$ committee is $O(n\log k)$, it follows that $\frac{n}{k^2}$-ls-PAV
converges in at most $O(k^2\log k)$ iterations.
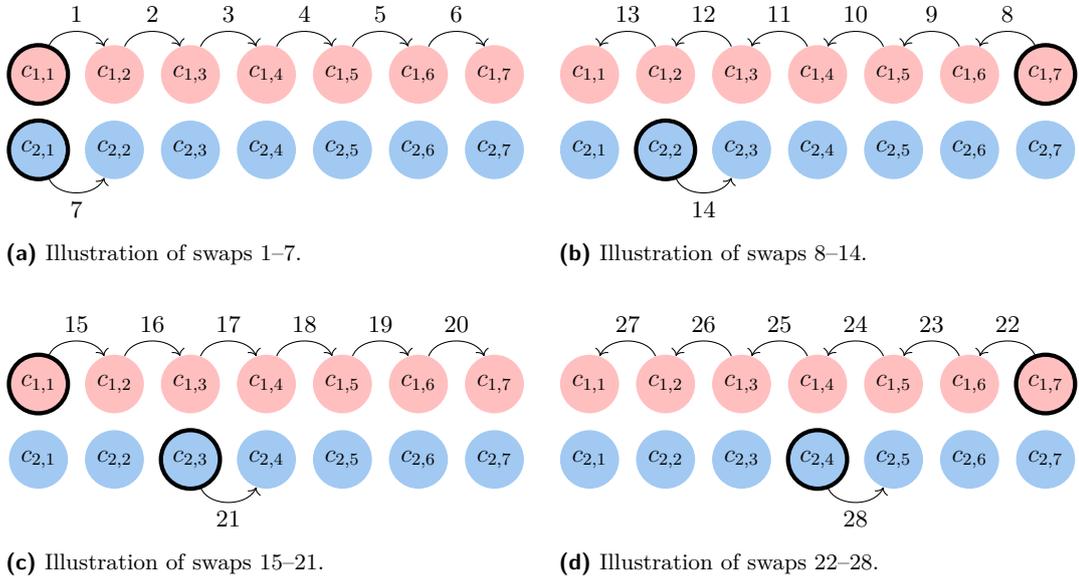
\begin{figure}[h]
\centering
\begin{subfigure}{0.48\linewidth}

\begin{tikzpicture}

 \foreach \x in {1,2,...,7}
      {
        \node[fill=pink,shape=circle, scale=0.8, font=\Large] (node\x) at (\x-3.75,0) { $c_{1,\x}$};
        \node[fill=babyblueeyes,shape=circle, scale=0.8, font=\Large] (node2\x) at (\x-3.75,-1) { $c_{2,\x}$};

      }
 \foreach \x [count= \y from 2] in {1,2,...,6}
      {
               \draw (node\x) edge [->, bend left=70] node[above, font=\small] {\x} (node\y);
        }
                       \draw (node21) edge [->, bend right=70] node[below, font=\small] {7} (node22);

                \node[draw, ultra thick, fill=pink,shape=circle, scale=0.8, font=\Large] (node1) at (-2.75,0) { $c_{1,1}$};
                \node[draw, ultra thick, fill=babyblueeyes,shape=circle, scale=0.8, font=\Large] (node1) at (-2.75,-1) { $c_{2,1}$};

\end{tikzpicture}\caption{Illustration of swaps 1--7.}
\end{subfigure}
\hfill
\begin{subfigure}{0.48\linewidth}
\begin{tikzpicture}

 \foreach \x in {1,2,...,7}
      {
     \node[fill=pink,shape=circle, scale=0.8, font=\Large] (node\x) at (\x-3.75,0) { $c_{1,\x}$};
    \node[fill=babyblueeyes,shape=circle, scale=0.8, font=\Large] (node2\x) at (\x-3.75,-1) { $c_{2,\x}$};

      }

    \foreach \x in {1,2,...,6}
      {
                \pgfmathsetmacro{\y}{int(\x+1)};
                \pgfmathsetmacro{\l}{int(13-\x+1)}

               \draw (node\y) edge [->, bend right=70] node[above, font=\small] {\l} (node\x);
        }
    \draw (node22) edge [->, bend right=70] node[below, font=\small] {14} (node23);

    \node[draw, ultra thick, fill=pink,shape=circle, scale=0.8, font=\Large] (node1) at (3.25,0) { $c_{1,7}$};
    \node[draw, ultra thick, fill=babyblueeyes,shape=circle, scale=0.8, font=\Large] (node1) at (-1.75,-1) { $c_{2,2}$};

\end{tikzpicture}\caption{Illustration of swaps 8--14.}
\end{subfigure}\vspace{0.5cm}
\begin{subfigure}{0.48\linewidth}

\begin{tikzpicture}

 \foreach \x in {1,2,...,7}
      {
     \node[fill=pink,shape=circle, scale=0.8, font=\Large] (node\x) at (\x-3.75,0) { $c_{1,\x}$};
    \node[fill=babyblueeyes,shape=circle, scale=0.8, font=\Large] (node2\x) at (\x-3.75,-1) { $c_{2,\x}$};

      }

    \foreach \x in {1,2,...,6}
      {
                \pgfmathsetmacro{\y}{int(\x+1)};
                \pgfmathsetmacro{\l}{int(14+\x)}

               \draw (node\x) edge [->, bend left=70] node[above, font=\small] {\l} (node\y);
        }
    \draw (node23) edge [->, bend right=70] node[below, font=\small] {21} (node24);

    \node[draw, ultra thick, fill=pink,shape=circle, scale=0.8, font=\Large] (node1) at (-2.75,0) { $c_{1,1}$};
    \node[draw, ultra thick, fill=babyblueeyes,shape=circle, scale=0.8, font=\Large] (node1) at (-0.75,-1) { $c_{2,3}$};

\end{tikzpicture}\caption{Illustration of swaps 15--21.}
\end{subfigure}\hfill
\begin{subfigure}{0.48\linewidth}

\begin{tikzpicture}

 \foreach \x in {1,2,...,7}
      {
     \node[fill=pink,shape=circle, scale=0.8, font=\Large] (node\x) at (\x-3.75,0) { $c_{1,\x}$};
    \node[fill=babyblueeyes,shape=circle, scale=0.8, font=\Large] (node2\x) at (\x-3.75,-1) { $c_{2,\x}$};

      }

    \foreach \x in {1,2,...,6}
      {
                \pgfmathsetmacro{\y}{int(\x+1)};
                \pgfmathsetmacro{\l}{int(27-\x+1)}

               \draw (node\y) edge [->, bend right=70] node[above, font=\small] {\l} (node\x);
        }
    \draw (node24) edge [->, bend right=70] node[below, font=\small] {28} (node25);

    \node[draw, ultra thick, fill=pink,shape=circle, scale=0.8, font=\Large] (node1) at (3.25,0) { $c_{1,7}$};
    \node[draw, ultra thick, fill=babyblueeyes,shape=circle, scale=0.8, font=\Large] (node1) at (0.25,-1) { $c_{2,4}$};

\end{tikzpicture}\caption{Illustration of swaps 22--28.}
\end{subfigure}
\caption{Highlighted candidates are in the committee; arrows from $a$ to $b$ labelled with $t$ indicate that $a$ is replaced by $b$ in iteration $t$. Observe that pink candidates are only replaced by pink candidates; similarly, blue candidates are only replaced by blue candidates. This illustrates a swap sequence similar to that in the proof of Theorem~\ref{thm:threshold}, with the exception that, to make the figure visually appealing, we display an equal number of blue and pink candidates. }
\end{figure}
\begin{theorem}\label{thm:threshold} 

For every $k\geq 1$ there exists a committee election with $\mathrm{poly}(k)$ voters, 
a committee $W_0$, $|W_0|=k$, and a sequence of $\Omega(k^2)$ swaps starting from $W_0$ such that each swap in this sequence strictly increases the PAV score by at least $\frac{n}{k^2}$.
\end{theorem}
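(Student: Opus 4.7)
The plan is to construct the election depicted by the figures. Take $K=L=\lfloor k/2\rfloor$ ``pink'' candidates $c_{1,1},\dots,c_{1,K}$ and ``blue'' candidates $c_{2,1},\dots,c_{2,L}$, together with $k-2$ dummy candidates approved by no voter. The initial committee $W_0$ contains $c_{1,1}$, $c_{2,1}$, and the dummies, and the dummies will stay in the committee throughout. The target sequence sweeps the pink coordinate from $c_{1,1}$ to $c_{1,K}$, advances blue from $c_{2,1}$ to $c_{2,2}$, sweeps pink backward to $c_{1,1}$, advances blue to $c_{2,3}$, and so on. In total this performs $L(K-1)+(L-1)=\Omega(k^2)$ swaps, so it suffices to exhibit voters for which every swap in this sequence gains at least $\varepsilon=n/k^2$ in PAV score.

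I would then design voters in (at most) three families. \emph{Pair voters} approve $\{c_{1,i},c_{2,j}\}$ with multiplicities $p_{i,j}$ arranged so that odd columns are strictly decreasing in $i$, even columns are strictly increasing in $i$, and all row sums $R_i=\sum_j p_{i,j}$ and column sums $C_j=\sum_i p_{i,j}$ are constant; by a direct computation the PAV score in state $(c_{1,i^*},c_{2,j^*})$ equals $R_{i^*}+C_{j^*}-p_{i^*,j^*}/2$ plus a term independent of $(i^*,j^*)$, so every pink swap along the sequence gains $\tfrac12|p_{i^*+1,j^*}-p_{i^*,j^*}|$, which is positive exactly when the sweep direction is aligned with the parity of $j^*$. \emph{Blue-only voters} with multiplicities $b_j$ growing linearly in $j$ make each blue advance strictly beneficial, offsetting the (negative) pair-voter contribution created by the $i^*\in\{1,K\}$ boundary configurations at which we advance the blue coordinate. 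If these two families do not suffice, I would also add \emph{half-pink voters} approving either $\{c_{1,i'+1},\dots,c_{1,K}\}$ or $\{c_{1,1},\dots,c_{1,i'}\}$ intersected with a parity class of blues, to further fine-tune the sign of each pink swap gain.

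The main obstacle is an arithmetic tension. Forcing every pink swap to gain $\Omega(\varepsilon)$ requires column differences $|p_{i+1,j}-p_{i,j}|=\Omega(\varepsilon)$, which caps $\max_{i,j}p_{i,j}$ below $O(\varepsilon K)$; on the other hand, overcoming the pair-voter resistance at the endpoints of the pink sweeps requires $b_{j+1}-b_j=\Omega(\varepsilon K)$, so $\sum_j b_j=\Omega(\varepsilon K L^2)$. Since $\varepsilon=n/k^2$ and $n\geq\sum_{i,j}p_{i,j}+\sum_j b_j$, the construction only closes if the column and row sums of $(p_{i,j})$ cancel almost exactly at the boundary, so that the blue-only budget does not dominate. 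I would therefore choose $p_{i,j}=\tau\,(K-i)$ on odd columns and $p_{i,j}=\tau\,(i-1)$ on even columns for an appropriate scale $\tau=\mathrm{poly}(k)$, which keeps $R_i$ and $C_j$ constant and balances both total voter count and per-swap gain at $\Theta(\tau)$; the half-pink voters, if included, play the role of absorbing the residual imbalance at $i^*\in\{1,K\}$. Once the parameters are fixed, verifying the $\Omega(k^2)$ swap gains reduces to a routine check of the constant number of linear inequalities of the form ``gain $\geq n/k^2$'', one per swap type (pink forward at odd $j^*$, pink backward at even $j^*$, blue advance at $i^*=K$, blue advance at $i^*=1$).
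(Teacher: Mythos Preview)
Your swap sequence and high-level architecture match the paper's, but the \emph{pair-voter} mechanism you make primary does not close. With $p_{i,j}=\tau(K-i)$ on odd columns and $\tau(i-1)$ on even columns, each column sum is $\tau K(K-1)/2$, so the pair voters alone number
\[
\sum_{i,j}p_{i,j}\;=\;L\cdot \frac{\tau K(K-1)}{2}\;=\;\Theta(\tau k^3)
\]
when $K=L=\lfloor k/2\rfloor$. Hence $n/k^2=\Theta(\tau k)$ \emph{before} you even add blue-only voters, while each pink swap gains exactly $\tau/2$. The inequality $\tau/2\ge n/k^2$ is therefore impossible for large $k$, regardless of how you tune $\tau$ or the $b_j$. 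Your own ``arithmetic tension'' paragraph locates the difficulty on the blue-only side, but the pair-voter budget is already fatal by itself: singleton approvers force $\max_{i,j}p_{i,j}=\Theta(\tau K)$ just to make adjacent column entries differ by $\tau$, and this multiplies out to a cubic voter count against a constant gain.

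The fix is exactly the gadget you relegate to an optional third family. The paper drops pair voters entirely and uses your ``half-pink'' voters as the \emph{primary} engine: one voter $v_{1,i}$ approving the pink suffix $\{c_{1,i+1},\dots\}$ together with the even blues, and one voter $v_{2,i}$ approving the pink prefix $\{c_{1,1},\dots,c_{1,i}\}$ together with the odd blues. Then the swap $(c_{1,i},c_{1,i+1})$ at odd $j^*$ affects only $v_{1,i}$ and $v_{2,i}$, yielding gain $1-\tfrac12=\tfrac12$ with only $\Theta(k)$ such voters in total. Similarly, for blue the paper uses \emph{suffix} approvers $S_j$ (each voter in $S_j$ approves $\{c_{2,j},\dots,c_{2,k}\}$) rather than your singleton $b_j$ voters, so each blue advance picks up a fresh group of size $t$ without requiring linearly growing multiplicities; this keeps the blue voter count at $\Theta(k^2)$. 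With these two changes $n=\Theta(k^2)$, $n/k^2\le\tfrac12$, and every swap gains at least $\tfrac12$. So the missing idea is not the swap pattern but the use of interval (prefix/suffix) ballots instead of singleton ballots to get $\Theta(1)$ gain per swap from $\Theta(k)$ voters rather than $\Theta(k^2)$ per column.
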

\begin{proof}	
Let $k\ge 4$ and $t=\lfloor{\frac{k}{4}}\rfloor$.
We define the election $E=(N,C,(A_v)_{v\in N},k)$ as follows. 
\begin{align*}
	& C=C_1\cup C_2\cup D_{k-2},\quad
	C_1=\{c_{1,1},\ldots, c_{1,t+1}\}, \quad C_2=\{c_{2,1},\ldots, c_{2,k}\};\\
	& N=V_1 \cup V_2 \cup\bigcup_{j\in [k]} S_j \cup U, \text{ where }
	V_\ell=\{v_{\ell,1},\ldots, v_{\ell,t}\}, \ell\in [2], \quad
 |U|=\left\lfloor{\frac{k^2}{4}-\frac{k}{2}}\right\rfloor,\\ 
    &|S_j|=t, j\in [k].
\end{align*}
 The approval sets of voters $v_{1, i}$ and $v_{2, i}$, $i\in [t]$, are given by
\begin{align*}
&A_{v_{1,i}}=\{c_{1,i+1},\ldots, c_{1,t+1}\}\cup\{c_{2,j}\in C_2\mid j\text{ is even}\} \text{  and  }\\ 
&A_{v_{2,i}} = \{c_{1,1},\ldots, c_{1,i}\}\cup\{c_{2,j}\in C_2\mid j\text{ is odd}\},
\end{align*}
For each $j\in [k]$ the approval set of each voter $s\in S_j$ is 
$A_{s}=\{c_{2,j},\ldots, c_{2,k}\}$.
Each $u\in U$ has approval set $A_{u}=D_{k-2}$. 
Intuitively, voters in $U$ are dummy voters and candidates in $D_{k-2}$ are dummy candidates. The sequence of swaps we will exhibit affects voters in $N\setminus U$ and candidates in $C_1\cup C_2$, but no other voters or candidates.

\subparagraph{Set-up}  Consider $\frac{n}{k^2}$-ls-PAV on the above instance
with an initial committee 
$$
W_0=D_{k-2}\cup\{c_{1,1},c_{2,1}\}.
$$
We first show that $\frac{n}{k^2}\leq \frac12$. Indeed, we have
\begin{align*}
    n=|V_1|+|V_2|+\sum_{j=1}^k |S_j|+|U|=
    \left\lfloor\frac{k}{4}\right\rfloor+ 
    \left\lfloor\frac{k}{4}\right\rfloor+
    k\times \left\lfloor\frac{k}{4}\right\rfloor+
    \left\lfloor\frac{k^2}{4}-\frac{k}{2}\right\rfloor\leq \frac{k^2}{2}.
\end{align*}
In what follows, we will say that a swap $(a,b)$ is a \textit{good swap for} $W$ if $\Delta(W,a,b)\geq\frac{n}{k^2}$; we will say that $(a, b)$ is a \textit{good swap} if $W$ is clear from the context.
As we have argued that $\frac{n}{k^2}\leq \frac12$, 
every valid swap that increases the PAV score by at least $\frac{1}{2}$ 
is a good swap.

\subparagraph{Sequence of Swaps} Define 
$\textbf{\text{Y}}=\oplus_{i=1}^t(c_{1,i}, c_{1,i+1})$
and let ${\mathbf Z}_j=\mathbf Y$ if $j$ is odd and ${\mathbf Z}_j={\mathbf Y}^{-1}$
if $j$ is even. Let
$\textbf{X}=\oplus_{j=1}^{k-1}\left( {\mathbf Z}_j\oplus (c_{2,j},c_{2,j+1})\right)$, 
and note that 
$|\textbf{Y}|=|\textbf{\text{Y}}^{-1}|=t$, 
so
$|\textbf{X}|=(k-1)\cdot (t+1)=\Omega(k^2)$.
We will argue that all swaps in \textbf{X} are good.
\\

To this end, we split up the analysis into the following four claims.
\begin{enumerate}[(i)]
\item For committee $W=D_{k-2}\cup\{c_{1,i}, c_{2,j}\}$, if $j$ is odd and $i\leq t$, $(c_{1,i}, c_{1,i+1})$ is a good swap.
\item For committee $W=D_{k-2}\cup\{c_{1,i}, c_{2,j}\}$, if $j$ is even and $i>1$, $(c_{1,i}, c_{1, i-1})$ is a good swap.
\item For committee $W=D_{k-2}\cup\{c_{1,t+1}, c_{2,j}\}$, if $j<k$ is odd, $(c_{2,j},c_{2,j+1})$ is a good swap.
\item For committee $W=D_{k-2}\cup\{c_{1,1}, c_{2,j}\}$, if $j<k$ is even, $(c_{2,j},c_{2,j+1})$ is a good swap.
\end{enumerate}
Together, these four claims imply that $\textbf{X}$ is a sequence of good swaps.
Indeed, suppose that $j$ is odd, and consider 
the committee $W=D_{k-2}\cup\{c_{1,1}, c_{2,j}\}$. By Claim~(i), 
$\textbf{Z}_j = \textbf{Y}$ is a sequence of good swaps. 
After executing \textbf{Y} we obtain a committee $W=D_{k-2}\cup\{c_{1,t+1}, c_{2,j}\}$ satisfying the condition in Claim~(iii). Hence, if $j<k$, then $(c_{2,j},c_{2,j+1})$ is a good swap. This swap results in a committee $W=D_{k-2}\cup\{c_{1,t+1}, c_{2,j+1}\}$ satisfying the condition in Claim~(ii). This implies that $\textbf{Z}_{j+1} = \textbf{Y}^{-1}$ is a sequence of good swaps, resulting in a committee $W=D_{k-2}\cup\{c_{1,1}, c_{2,j+1}\}$. This committee, in turn, is as described in (iv), so if $j+1<k$, then $(c_{2,j+1},c_{2,j+2})$ a good swap. This results in $W=D_{k-2}\cup\{c_{1,1}, c_{2,j+2}\}$, which again satisfies the condition in (i). As this reasoning applies 
to all odd values of $j$, including $j=1$ (which corresponds to our starting point $W_0$), we can conclude that the sequence \textbf{X} is a sequence of good swaps.
It remains to prove Claims~(i)--(iv).

\subparagraph{Claim (i)} 
Suppose $W=D_{k-2}\cup\{c_{1,i}, c_{2,j}\}$ where $j\in [k]$ is odd and $i\in [t]$.
Then $|A_{v_{2,i}}\cap W|=2$ and $|A_{v_{1,i}}\cap W|=0$. Moreover, 
$v_{2,i}$ approves $c_{1,i}$ and not $c_{1,i+1}$, and conversely for $v_{1,i}$, while every other voter approves either both or neither of $c_{1,i}$ and $c_{1,i+1}$. 
We conclude that $(c_{1, i},c_{1,i+1})$ is a good swap, because
$\Delta(W,c_{1,i},c_{1,i+1})= +1-\frac{1}{2}=\frac{1}{2}$.

\subparagraph{Claim (ii)}
Suppose $W=D_{k-2}\cup\{c_{1,i}, c_{2,j}\}$, where $1< i \leq t+1$ and $j\in [k]$ is even. 
Then $|A_{v_{1,i-1}}\cap W|=2$, $|A_{v_{2,i-1}}\cap W|=0$, and every other voter approves either both of $c_{1,i-1}$ and $c_{1,i}$ or neither of them. Thus, $\Delta(W,c_{1,i},c_{1,i-1})= +1-\frac{1}{2}=\frac{1}{2}$.

\subparagraph{Claim (iii)}
Suppose $W=D_{k-2}\cup\{c_{1,t+1}, c_{2,j}\}$, where $j<k$ is odd. Each voter in $S_{j+1}$ approves $c_{2,j+1}$ and not $c_{2,j}$. Every voter in $V_2$ approves $c_{2,j}$, but not $c_{2,j+1}$, while every voter in $V_1$ approves $c_{2,j+1}$ and not $c_{2,j}$. By construction, the remaining voters (i.e., the voters in $S_\ell$, $\ell\neq j+1$, and the voters in $U$) approve either both of $c_{2,j}$ and $c_{2,j+1}$ or neither.
For $s\in S_{j+1}$, their satisfaction is $|A_s\cap W|=0$. For every $i\in[t]$ we have $|A_{v_{1,i}}\cap W|=1$ as $v_{1,i}$ only approves $c_{1,t+1}$ in $W$, and $|A_{v_{2,i}}\cap W|=1$, because $v_{2,i}$ only approves $c_{2,j}$ as $j$ is odd.
Thus, 
$\Delta(W,c_{2,j},c_{2,j+1})=+|S_{j+1}|+\frac{1}{2}\cdot|V_1|-|V_2|=t+\frac{t}{2}-t= \frac{t}{2}\geq \frac{1}{2}$,
provided $k\ge 4$. This shows that $(c_{2,j},c_{2,j+1})$ is a good swap. 

\subparagraph{Claim (iv)}
Suppose $W=D_{k-2}\cup\{c_{1,1}, c_{2,j}\}$, where $j<k$ is even. Similarly to the proof of Claim~(iii), we can show that 
$\Delta(W,c_{2,j},c_{2,j+1})\geq \frac{1}{2}$. 
This concludes the proof.
\end{proof}



\section{Main Result}\label{sec:mainresult}
We are now ready to present our main result.
\begin{restatable}[\appsymb]{theorem}{zerolow}\label{thm:zerolow}
    For every $k\geq 1$ there exists a committee election with $\mathrm{poly}(k)$ voters, 
a committee $W_0$, $|W_0|=k$, and a sequence of $\Omega(k^{\log k})$ swaps starting from $W_0$ such that each swap in this sequence strictly increases the PAV score.
\end{restatable}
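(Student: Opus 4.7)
The plan is to follow the three-level roadmap laid out in Section~1.1. The starting point is a number-theoretic fact: there exist integers $w_1,\ldots,w_k$ with $|w_i| = \mathrm{poly}(k)$ such that
\[
0 < \sum_{i=1}^k \frac{w_i}{i} \le \frac{1}{k^{\log k}}.
\]
This is the Borwein--Mossinghoff-type result referenced in the excerpt. Given such weights, I would first build the \emph{level-1 gadget}: an election with a committee $W^{(1)}$ of size $k$ and a specific swap $(a^*, b^*)$ whose PAV improvement $\Delta(W^{(1)}, a^*, b^*)$ equals $\sum_i w_i/i$, hence is positive but at most $1/k^{\log k}$. Concretely, for each index $i$ I introduce $|w_i|$ voters whose approval sets are tuned so that each contributes $\mathrm{sign}(w_i)/i$ to the swap: a voter who approves $b^*$ but not $a^*$ and has exactly $i-1$ other approved candidates in $W^{(1)}$ contributes $+1/i$; a voter who approves $a^*$ but not $b^*$ and has exactly $i$ approved candidates in $W^{(1)}$ (one of which is $a^*$) contributes $-1/i$. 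Dummies from $D$ can pad the committee, and the total voter count is $\sum_i |w_i| = \mathrm{poly}(k)$.

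With the gadget in hand, I would build two further layers mimicking the warm-up (Theorem~\ref{thm:threshold}). The \emph{level-2 chain} embeds many copies of the level-1 gadget tied together by auxiliary ``advancer'' candidates---just as $\mathbf{Y}$ in the warm-up cascades through $c_{1,1},\ldots,c_{1,t+1}$---producing a sequence of strictly improving swaps whose length is a polynomial factor longer than a single gadget. The \emph{level-3 wrapper} then alternates forward and inverse traversals of the level-2 chain, separated by single ``big'' swaps on a parity-flipping candidate, analogous to $\mathbf{X} = \bigoplus_j(\mathbf{Z}_j \oplus (c_{2,j}, c_{2,j+1}))$ in the warm-up. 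By symmetry of the construction, the inverse traversal is also strictly improving, and each outer swap resets the inner state so that the chain can be re-traversed. Multiplying the lengths of the three layers yields $\Omega(k^{\log k})$ swaps overall.

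The principal obstacle is the level-1 gadget together with its \emph{composability}. Engineering voters whose individual contributions realise the Borwein--Mossinghoff weights $w_i/i$ is already delicate because a single voter's approval set must simultaneously produce the correct saturation level in $W^{(1)}$ and in $(W^{(1)} \setminus \{a^*\}) \cup \{b^*\}$, so the target fractions must be read off consistently from both. Even more subtle is ensuring that when the outer layers modify the committee \emph{around} the gadget, the gadget's local improvement $\sum_i w_i/i$ remains positive---this is precisely what lets $0^+$-ls-PAV ``repeatedly undo previously achieved progress.'' Maintaining this invariance across all three layers while keeping the total voter count polynomial in $k$ is the core combinatorial task.
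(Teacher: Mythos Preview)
Your three-layer plan cannot reach $\Omega(k^{\log k})$. The warm-up wrapper you invoke has exactly two levels of nesting (an inner chain of length $t$ and an outer loop of length $\Theta(k)$), and multiplying those gives $\Theta(k^2)$; prepending a single level-1 gadget---one swap of size $1/k^{\log k}$---does not add another multiplicative factor. ``Multiplying the lengths of the three layers'' yields a product of three polynomials in $k$, which is still polynomial. Having one swap whose improvement is tiny does not by itself produce a long improving \emph{sequence}; the upper bound $\pavsc \le O(n\log k)$ divided by $1/k^{\log k}$ is a cap on how long a sequence \emph{could} be, not a construction of one.

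What the paper actually does is different in kind, not in degree. It builds a \emph{family} of gadgets $E(j,k)$, one for each $j$, whose single-swap improvement is $\delta(j,k)=j!/\prod_{j'=0}^{j}(k-j')$, and the crucial property (Lemma~\ref{gain}) is the multiplicative separation $\delta(j,k) > t\cdot\delta(j+2,k)$ for $t\approx k$. The final instance places $k_1=\Theta(\log k)$ columns side by side, column $i$ using the gadget with parameter $j_i=2k_1-2(i-1)$; a single swap in column $i$ destabilises column $i-1$ and---because of the separation---still strictly improves the score even though it forces all of column $i-1$'s $t$ swaps to be undone. The swap sequence is therefore defined by a recursion of depth $k_1$ with branching factor $t$, giving length $\Omega(t^{k_1})=\Omega(k^{\log k})$. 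The Borwein--Mossinghoff result is mentioned only as a benchmark that the explicit recursive construction of $E(j,k)$ matches; the proof does not import weights from it. The idea you are missing is precisely this $\log k$-deep hierarchy of gadgets with geometrically separated gains, not a single small-gain gadget wrapped in the two-level warm-up scaffold.
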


The proof of our main lower bound is in many ways similar to the proof of Theorem~\ref{thm:threshold}. We start 
by giving a high-level overview of the proof, and then describe our construction; the proof of correctness is mostly relegated to the extended version of the paper.
We first consider the adversarial setting;
in Section~\ref{sec:nabr} we will show how to extend our proof to lexicographic better response.
 
 \subsection{Proof Overview}
 We construct a committee $W$ of size $k$ and a sequence of swaps of length $\Omega(k^{\log k})$ such that $0^+$-ls-PAV may execute this sequence when initialised on $W$. Just as in the proof of Theorem~\ref{thm:threshold}, this sequence of swaps leaves most members of the initial committee untouched. Most of the action takes place in 
 the first $k_1=O(\log k)$ committee spots; the candidates in the remaining spots stay in place throughout the entire sequence of swaps. 
For each $i=1, \dots, k_1$, the
committee spot $i$ is assigned its own set of candidates $C_i$, so that in the sequence we construct, a candidate in $C_i$ can only be replaced by another candidate from~$C_i$.

Furthermore, each of the first $k_1$ committee positions has its corresponding set of voters, and between different committee positions, initially the voters' satisfaction is very unequal: voters corresponding to the first position are most happy, and voters corresponding to later committee positions are increasingly unhappy. The sequence of swaps will again start by making the most happy voters happier and then move on to less happy voters, making them better off.\textit{ Thereby it will undo all the work it has done so far and will have to repeat it.}

Consider Figure~\ref{swapseq}, where each board represents a committee. The figure should be read left to right and top to bottom, where the next board/committee results from the previous one if certain swaps are made. More precisely, within a board each square corresponds to a candidate, so that $C_i$ is the set of candidates in the $i$-th column.
The coloured squares mark the candidates that are in the committee, and arrows between squares indicate swaps between candidates. The result of the swap(s) can be seen in the next board.


\begin{figure*}[h!]
\centering
\setlength{\tabcolsep}{4pt}
\renewcommand{\arraystretch}{1}
\begin{tabularx}{\textwidth}{@{} *{3}{c} @{}}
\begin{tabular}{c|c|c|c|c}
      i/j& $1$ & $2$ & $3$ & $4$\\ \hline
      1 & \tikzmark{1611}  &\tikzmark{1621}  &\tikzmark{1631}  & \tikzmark{1641} \cfour \\ \hline
      2 &\tikzmark{1612}  &   \tikzmark{1622}  & \tikzmark{1632} & \tikzmark{1642} \\ \hline
     3 &\cone\tikzmark{1613}&\ctwo  \tikzmark{1623}  &\tikzmark{1633}\cthree& \tikzmark{1643}  \\
\end{tabular}

\begin{tikzpicture}[overlay,remember picture, shorten >=-3pt]
\draw[->] (pic cs:1641) -- (pic cs:1642) ;
\end{tikzpicture}
\begin{tabular}{c|c|c|c|c}
      i/j& $1$ & $2$ & $3$ & $4$\\ \hline
      1 & \tikzmark{1711}  &\tikzmark{1721}  &\tikzmark{1731}  & \tikzmark{1741}  \\ \hline
      2 &\tikzmark{1712}  &   \tikzmark{1722}  & \tikzmark{1732} & \tikzmark{1742} \cfour\\ \hline
     3 &\cone\tikzmark{1713}&\ctwo  \tikzmark{1723}  &\tikzmark{1733}\cthree& \tikzmark{1743}  \\
\end{tabular}

\begin{tikzpicture}[overlay,remember picture, shorten >=-3pt]
\draw[->] (pic cs:1733) -- (pic cs:1732) ;
\end{tikzpicture}
\begin{tabular}{c|c|c|c|c}
      i/j& $1$ & $2$ & $3$ & $4$\\ \hline
      1 & \tikzmark{1811}  &\tikzmark{1821}  &\tikzmark{1831}  & \tikzmark{1841}  \\ \hline
      2 &\tikzmark{1812}  &   \tikzmark{1822}  & \tikzmark{1832} \cthree & \tikzmark{1842} \cfour\\ \hline
     3 &\cone\tikzmark{1813}&\ctwo  \tikzmark{1823}  &\tikzmark{1833}& \tikzmark{1843}  \\
\end{tabular}

\begin{tikzpicture}[overlay,remember picture, shorten >=-3pt]
\draw[->] (pic cs:1823) -- (pic cs:1822) ;
\end{tikzpicture}
\begin{tabular}{c|c|c|c|c}
      i/j& $1$ & $2$ & $3$ & $4$\\ \hline
      1 & \tikzmark{1911}  &\tikzmark{1921}  &\tikzmark{1931}  & \tikzmark{1941}  \\ \hline
      2 &\tikzmark{1912}  &   \ctwo\tikzmark{1922}  & \tikzmark{1932} \cthree & \tikzmark{1942} \cfour\\ \hline
     3 &\cone\tikzmark{1913}&  \tikzmark{1923}  &\tikzmark{1933}& \tikzmark{1943}  \\
\end{tabular}

\begin{tikzpicture}[overlay,remember picture, shorten >=-3pt]
\draw[->] (pic cs:1913) -- (pic cs:1912) ;
\draw[->] (pic cs:1912) -- (pic cs:1911) ;
\end{tikzpicture}\\\\
\begin{tabular}{c|c|c|c|c}
      i/j& $1$ & $2$ & $3$ & $4$\\ \hline
      1 &\cone \tikzmark{2011}  &\tikzmark{2021}  &\tikzmark{2031}  & \tikzmark{2041}  \\ \hline
      2 &\tikzmark{2012}  &   \ctwo\tikzmark{2022}  & \tikzmark{2032} \cthree & \tikzmark{2042} \cfour\\ \hline
     3 &\tikzmark{2013}&  \tikzmark{2023}  &\tikzmark{2033}& \tikzmark{2043}  \\
\end{tabular}

\begin{tikzpicture}[overlay,remember picture, shorten >=-3pt]
\draw[->] (pic cs:2022) -- (pic cs:2021) ;
\end{tikzpicture}

\begin{tabular}{c|c|c|c|c}
      i/j& $1$ & $2$ & $3$ & $4$\\ \hline
      1 &\cone \tikzmark{2111}  &\ctwo\tikzmark{2121}  &\tikzmark{2131}  & \tikzmark{2141}  \\ \hline
      2 &\tikzmark{2112}  &   \tikzmark{2122}  & \tikzmark{2132} \cthree & \tikzmark{2142} \cfour\\ \hline
     3 &\tikzmark{2113}&  \tikzmark{2123}  &\tikzmark{2133}& \tikzmark{2143}  \\
\end{tabular}

\begin{tikzpicture}[overlay,remember picture, shorten >=-3pt]
\draw[->] (pic cs:2111) -- (pic cs:2112) ;
\draw[->] (pic cs:2112) -- (pic cs:2113) ;
\end{tikzpicture}

\begin{tabular}{c|c|c|c|c}
      i/j& $1$ & $2$ & $3$ & $4$\\ \hline
      1 & \tikzmark{2211}  &\ctwo\tikzmark{2221}  &\tikzmark{2231}  & \tikzmark{2241}  \\ \hline
      2 &\tikzmark{2212}  &   \tikzmark{2222}  & \tikzmark{2232} \cthree & \tikzmark{2242} \cfour\\ \hline
     3 &\cone\tikzmark{2213}&  \tikzmark{2223}  &\tikzmark{2233}& \tikzmark{2243}  \\
\end{tabular}

\begin{tikzpicture}[overlay,remember picture, shorten >=-3pt]
\draw[->] (pic cs:2232) -- (pic cs:2231) ;
\end{tikzpicture}

\begin{tabular}{c|c|c|c|c}
      i/j& $1$ & $2$ & $3$ & $4$\\ \hline
      1 & \tikzmark{2311}  &\ctwo\tikzmark{2321}  &\tikzmark{2331}  \cthree  & \tikzmark{2341}  \\ \hline
      2 &\tikzmark{2312}  &   \tikzmark{2322}  & \tikzmark{2332}& \tikzmark{2342} \cfour\\ \hline
     3 &\cone\tikzmark{2313}&  \tikzmark{2323}  &\tikzmark{2333}& \tikzmark{2343}  \\
\end{tabular}

\begin{tikzpicture}[overlay,remember picture, shorten >=-3pt]
\draw[->] (pic cs:2321) -- (pic cs:2322)  ;
\end{tikzpicture}

\begin{tabular}{c|c|c|c|c}
      i/j& $1$ & $2$ & $3$ & $4$\\ \hline
      1 & \tikzmark{2411}  &\tikzmark{2421}  &\tikzmark{2431}  \cthree  & \tikzmark{2441}  \\ \hline
      2 &\tikzmark{2412}  &\ctwo   \tikzmark{2422}  & \tikzmark{2432}& \tikzmark{2442} \cfour\\ \hline
     3 &\cone\tikzmark{2413}&  \tikzmark{2423}  &\tikzmark{2433}& \tikzmark{2443}  \\
\end{tabular}

\begin{tikzpicture}[overlay,remember picture, shorten >=-3pt]
\draw[->] (pic cs:2413) -- (pic cs:2412) ;
\draw[->] (pic cs:2412) -- (pic cs:2411) ;
\end{tikzpicture}\\\\

\begin{tabular}{c|c|c|c|c}
      i/j& $1$ & $2$ & $3$ & $4$\\ \hline
      1 & \cone\tikzmark{2511}  &\tikzmark{2521}  &\tikzmark{2531}  \cthree  & \tikzmark{2541}  \\ \hline
      2 &\tikzmark{2512}  &\ctwo   \tikzmark{2522}  & \tikzmark{2532}& \tikzmark{2542} \cfour\\ \hline
     3 &\tikzmark{2513}&  \tikzmark{2523}  &\tikzmark{2533}& \tikzmark{2543}  \\
\end{tabular}

\begin{tikzpicture}[overlay,remember picture, shorten >=-3pt]
\draw[->] (pic cs:2522) -- (pic cs:2523) ;
\end{tikzpicture}

\begin{tabular}{c|c|c|c|c}
      i/j& $1$ & $2$ & $3$ & $4$\\ \hline
      1 & \cone\tikzmark{2611}  &\tikzmark{2621}  &\tikzmark{2631}  \cthree  & \tikzmark{2641}  \\ \hline
      2 &\tikzmark{2612}  &   \tikzmark{2622}  & \tikzmark{2632}& \tikzmark{2642} \cfour\\ \hline
     3 &\tikzmark{2613}&\ctwo  \tikzmark{2623}  &\tikzmark{2633}& \tikzmark{2643}  \\
\end{tabular}

\begin{tikzpicture}[overlay,remember picture, shorten >=-3pt]
\draw[->] (pic cs:2611) -- (pic cs:2612) ;
\draw[->] (pic cs:2612) -- (pic cs:2613) ;
\end{tikzpicture}

\begin{tabular}{c|c|c|c|c}
      i/j& $1$ & $2$ & $3$ & $4$\\ \hline
      1 &\tikzmark{2711}  &\tikzmark{2721}  &\tikzmark{2731}  \cthree  & \tikzmark{2741}  \\ \hline
      2 &\tikzmark{2712}  &   \tikzmark{2722}  & \tikzmark{2732}& \tikzmark{2742} \cfour\\ \hline
     3 & \cone\tikzmark{2713}&\ctwo  \tikzmark{2723}  &\tikzmark{2733}& \tikzmark{2743}  \\
\end{tabular}

\begin{tikzpicture}[overlay,remember picture, shorten >=-3pt]
\draw[->] (pic cs:2742) -- (pic cs:2743) ;
\end{tikzpicture}

\begin{tabular}{c|c|c|c|c}
      i/j& $1$ & $2$ & $3$ & $4$\\ \hline
      1 &\tikzmark{2811}  &\tikzmark{2821}  &\tikzmark{2831}  \cthree  & \tikzmark{2841}  \\ \hline
      2 &\tikzmark{2812}  &   \tikzmark{2822}  & \tikzmark{2832}& \tikzmark{2842} \\ \hline
     3 & \cone\tikzmark{2813}&\ctwo  \tikzmark{2823}  &\tikzmark{2833}&\cfour \tikzmark{2843}  \\
\end{tabular}

\begin{tikzpicture}[overlay,remember picture, shorten >=-3pt]
\draw[->] (pic cs:2831) -- (pic cs:2832) ;
\end{tikzpicture}

\begin{tabular}{c|c|c|c|c}
      i/j& $1$ & $2$ & $3$ & $4$\\ \hline
      1 &\tikzmark{2911}  &\tikzmark{2921}  &\tikzmark{2931}   & \tikzmark{2941}  \\ \hline
      2 &\tikzmark{2912}  &   \tikzmark{2922}  & \tikzmark{2932}\cthree& \tikzmark{2942} \\ \hline
     3 & \cone\tikzmark{2913}&\ctwo  \tikzmark{2923}  &\tikzmark{2933}&\cfour \tikzmark{2943}  \\
\end{tabular}

\begin{tikzpicture}[overlay,remember picture, shorten >=-3pt]
\draw[->] (pic cs:2923) -- (pic cs:2922) ;
\end{tikzpicture}\\\\

\begin{tabular}{c|c|c|c|c}
      i/j& $1$ & $2$ & $3$ & $4$\\ \hline
      1 &\tikzmark{3011}  &\tikzmark{3021}  &\tikzmark{3031}   & \tikzmark{3041}  \\ \hline
      2 &\tikzmark{3012}  & \ctwo   \tikzmark{3022}  & \tikzmark{3032}\cthree& \tikzmark{3042} \\ \hline
     3 & \cone\tikzmark{3013}& \tikzmark{3023}  &\tikzmark{3033}&\cfour \tikzmark{3043}  \\
\end{tabular}

\begin{tikzpicture}[overlay,remember picture, shorten >=-3pt]
\draw[->] (pic cs:3013) -- (pic cs:3012) ;
\draw[->] (pic cs:3012) -- (pic cs:3011) ;
\end{tikzpicture}

\begin{tabular}{c|c|c|c|c}
      i/j& $1$ & $2$ & $3$ & $4$\\ \hline
      1 &\cone \tikzmark{3111}  &\tikzmark{3121}  &\tikzmark{3131}   & \tikzmark{3141}  \\ \hline
      2 &\tikzmark{3112}  & \ctwo   \tikzmark{3122}  & \tikzmark{3132}\cthree& \tikzmark{3142} \\ \hline
     3 & \tikzmark{3113}& \tikzmark{3123}  &\tikzmark{3133}&\cfour \tikzmark{3143}  \\
\end{tabular}

\begin{tikzpicture}[overlay,remember picture, shorten >=-3pt]
\draw[->] (pic cs:3122) -- (pic cs:3121) ;
\end{tikzpicture}

\begin{tabular}{c|c|c|c|c}
      i/j& $1$ & $2$ & $3$ & $4$\\ \hline
      1 &\cone \tikzmark{3211}  & \ctwo \tikzmark{3221}  &\tikzmark{3231}   & \tikzmark{3241}  \\ \hline
      2 &\tikzmark{3212}  &  \tikzmark{3222}  & \tikzmark{3232}\cthree& \tikzmark{3242} \\ \hline
     3 & \tikzmark{3213}& \tikzmark{3223}  &\tikzmark{3233}&\cfour \tikzmark{3243}  \\
\end{tabular}

\begin{tikzpicture}[overlay,remember picture, shorten >=-3pt]
\draw[->] (pic cs:3211) -- (pic cs:3212) ;
\draw[->] (pic cs:3212) -- (pic cs:3213) ;
\end{tikzpicture}

\begin{tabular}{c|c|c|c|c}
      i/j& $1$ & $2$ & $3$ & $4$\\ \hline
      1 & \tikzmark{3311}  & \ctwo \tikzmark{3321}  &\tikzmark{3331}   & \tikzmark{3341}  \\ \hline
      2 &\tikzmark{3312}  &  \tikzmark{3322}  & \tikzmark{3332}\cthree& \tikzmark{3342} \\ \hline
     3 &\cone \tikzmark{3313}& \tikzmark{3323}  &\tikzmark{3333}&\cfour \tikzmark{3343}  \\
\end{tabular}

\begin{tikzpicture}[overlay,remember picture, shorten >=-3pt]
\draw[->] (pic cs:3332) -- (pic cs:3333) ;
\end{tikzpicture}

\begin{tabular}{c|c|c|c|c}
      i/j& $1$ & $2$ & $3$ & $4$\\ \hline
      1 & \tikzmark{3411}  & \ctwo \tikzmark{3421}  &\tikzmark{3431}   & \tikzmark{3441}  \\ \hline
      2 &\tikzmark{3412}  &  \tikzmark{3422}  & \tikzmark{3432}& \tikzmark{3442} \\ \hline
     3 &\cone \tikzmark{3413}& \tikzmark{3423}  &\tikzmark{3433} \cthree&\cfour \tikzmark{3443}  \\
\end{tabular}

\begin{tikzpicture}[overlay,remember picture, shorten >=-3pt]
\draw[->] (pic cs:3421) -- (pic cs:3422) ;
\end{tikzpicture}\\\\
\begin{tabular}{c|c|c|c|c}
      i/j& $1$ & $2$ & $3$ & $4$\\ \hline
      1 & \tikzmark{3511}  & \tikzmark{3521}  &\tikzmark{3531}   & \tikzmark{3541}  \\ \hline
      2 &\tikzmark{3512}  & \ctwo  \tikzmark{3522}  & \tikzmark{3532}& \tikzmark{3542} \\ \hline
     3 &\cone \tikzmark{3513}& \tikzmark{3523}  &\tikzmark{3533} \cthree&\cfour \tikzmark{3543}  \\
\end{tabular}

\begin{tikzpicture}[overlay,remember picture, shorten >=-3pt]
\draw[->] (pic cs:3513) -- (pic cs:3512) ;
\draw[->] (pic cs:3512) -- (pic cs:3511) ;
\end{tikzpicture}

\begin{tabular}{c|c|c|c|c}
      i/j& $1$ & $2$ & $3$ & $4$\\ \hline
      1 &\cone  \tikzmark{3611}  & \tikzmark{3621}  &\tikzmark{3631}   & \tikzmark{3641}  \\ \hline
      2 &\tikzmark{3612}  & \ctwo  \tikzmark{3622}  & \tikzmark{3632}& \tikzmark{3642} \\ \hline
     3 &\tikzmark{3613}& \tikzmark{3623}  &\tikzmark{3633} \cthree&\cfour \tikzmark{3643}  \\
\end{tabular}

\begin{tikzpicture}[overlay,remember picture, shorten >=-3pt]
\draw[->] (pic cs:3622) -- (pic cs:3623) ;
\end{tikzpicture}

\begin{tabular}{c|c|c|c|c}
      i/j& $1$ & $2$ & $3$ & $4$\\ \hline
      1 &\cone  \tikzmark{3711}  & \tikzmark{3721}  &\tikzmark{3731}   & \tikzmark{3741}  \\ \hline
      2 &\tikzmark{3712}  &  \tikzmark{3722}  & \tikzmark{3732}& \tikzmark{3742} \\ \hline
     3 &\tikzmark{3713}&  \ctwo\tikzmark{3723}  &\tikzmark{3733} \cthree&\cfour \tikzmark{3743}  \\
\end{tabular}

\begin{tikzpicture}[overlay,remember picture, shorten >=-3pt]
\draw[->] (pic cs:3711) -- (pic cs:3712) ;
\draw[->] (pic cs:3712) -- (pic cs:3713) ;

\end{tikzpicture}

\begin{tabular}{c|c|c|c|c}
      i/j& $1$ & $2$ & $3$ & $4$\\ \hline
      1 &  \tikzmark{3811}  & \tikzmark{3821}  &\tikzmark{3831}   & \tikzmark{3841}  \\ \hline
      2 &\tikzmark{3812}  &  \tikzmark{3822}  & \tikzmark{3832}& \tikzmark{3842} \\ \hline
     3 &\cone\tikzmark{3813}&  \ctwo\tikzmark{3823}  &\tikzmark{3833} \cthree&\cfour \tikzmark{3843}  \\
\end{tabular}
\end{tabularx}
\caption{We illustrate the sequence of swaps in Theorem~\ref{thm:zerolow} by a small example. Each one of the $23$ $3\times 4$ grids shows 12 candidates, one for each empty square, the 4 coloured ones indicating candidates currently in the committee. The empty squares in column $i$ are the candidates in $C_i$; they are ordered as $c_{i,1},c_{i,2},c_{i,3}$ from top to bottom. 
	We omit the dummy candidates from this picture, and let $k_1=4$ and $t+1=3$, as larger $t$ is only necessary for the sequence length in the proof. An arrow indicates the swap that will transform the current committee into the next committee. The top left initial committee is $\{c_{1,3},c_{2,3},c_{3,3},c_{4,1}\}$ and the bottom right final committee is $\{c_{1,3},c_{2,3},c_{3,3},c_{4,3}\}$. }\label{swapseq}
\end{figure*}
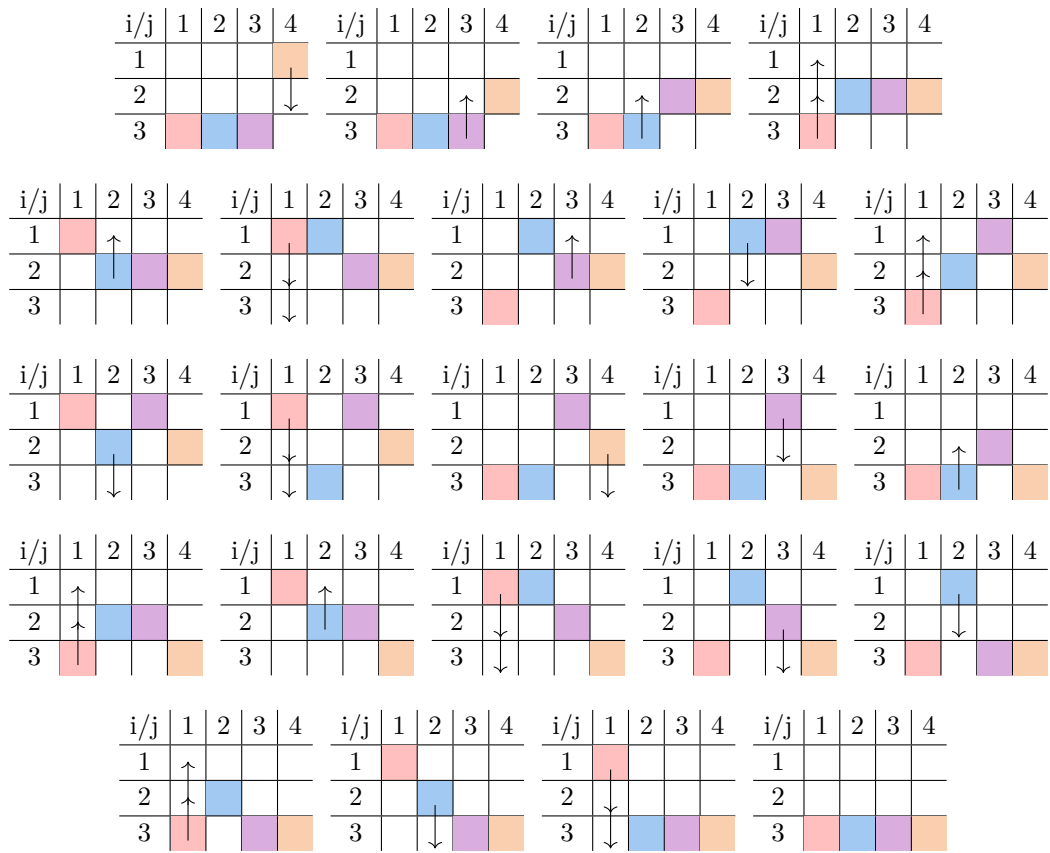

The pattern in Figure~\ref{swapseq} may not be immediately evident; the reader may want to revisit it after having read the proof of Theorem~\ref{thm:zerolow}. For now, we use Figure~\ref{swapseq} to illustrate how we build up our instance in several steps, creating increasingly larger building blocks.
 Observe that in Figure~\ref{swapseq} swaps only occur along a column (corresponding to candidates $C_i$ for some $i$): indeed, as we mentioned earlier, in our construction  swaps can only replace candidates in $C_i$ with other candidates in $C_i$. 
 \begin{enumerate}
 \item 
 Zooming in on a single swap, the voters responsible for this swap form an atomic building block of our construction. This building block, Election $E(j,k)$, is given in Section~\ref{sec:level1}. Here, $j$ is carefully picked to depend on the column $i$ (i.e., the $i$-th spot on the committee). In Lemma~\ref{atomgroup} we show that the corresponding swap increases the PAV score by exactly 
 $$
 \delta(j,k)=\frac{j!}{\prod_{j'=0}^{j}(k-j')}.
 $$
\item 
Zooming out to just the $i$-th column, the election responsible for the dynamics along the $i$-th column is $E^t(j,k)$, where $t=|C_i|-1$. 
We discuss how to construct $E^t(j, k)$ from $E(j,k)$ in Section~\ref{sec:level2}.
\item 
Finally, the entire board roughly corresponds to election $E$, constructed out of the building blocks $E^t(j,k)$ (Section~\ref{sec:level3}). 
\end{enumerate}
With the constructed election $E$ in hand, in Section~\ref{sec:abr} we exhibit an initial committee and a sequence of good swaps of length superpolynomial in $k$, such that, starting from this committee, $0^+$-ls-PAV under adversarial better response executes this sequence of swaps.
Finally, in Section~\ref{sec:nabr} we show how to modify this instance to establish that, even when the improving swaps are selected according to a fixed pivoting rule (rather than adversarially), $0^+$-ls-PAV may still make super-polynomially many swaps. To this end, we prove that a long subsequence of the swap sequence from Theorem~\ref{thm:zerolow} is preserved by the pivoting rule.


 \subsection{First Steps: Election $\boldsymbol{E(j,k)}$}\label{sec:level1}

We now introduce a family of elections that form the smallest building blocks of our instance.
We will frequently say that an election $E$ has structure $X$ if it is isomorphic to election $X$.
As in the proof of Theorem~\ref{thm:threshold}, we write $D_\ell=\{d_1, \dots ,d_\ell\}$
to denote a set of $\ell$ dummy candidates.
For committee size $k\in \mathbb{N}$, we will use induction on $j$ to construct elections $F(j,k)=(N,C,(A_v)_{v\in N},k)$ with $|N|=2^j$
and $C=D_{k-1}\cup \{a,b\}$ for each $1\le j<k$. 

\subparagraph{Construction 1} [Election $F(j,k)$]\label{fjk} 
Fix $j, k\in \mathbb{N}$ with $1\le j< k$. For $j=1$, let $F(j,k)=(\{1, 2\},C,(A_v)_{v=1, 2},k)$,
where $C=D_{k-1}\cup\{a,b\}$, $A_1=D_{k-1}\cup\{a\}$ and $A_2=D_{k-2}\cup\{b\}$.

For $j>1$ (and $k\geq j+1$), we construct $F(j,k)$ as follows. Consider elections $F(j-1,k)=(N_1,C_1,(A_v)_{v\in N_1},k)$ and $F(j-1,k-1)=(N_2,C_2,(A_v)_{v\in N_2},k-1)$,
where $N_1\cap N_2= \varnothing$ (we relabel the voters to ensure they are distinct), and $C_1=D_{k-1}\cup \{a_1,b_1\}$, $C_2=D_{k-2}\cup \{a_2,b_2\}$.
We set $C=D_{k-1}\cup\{a,b\}$, $N=N_1\sqcup N_2$, 
and modify the voters' ballots as follows: for each $v\in N_1$ we replace each occurrence
of $a_1$ and $b_1$ in $A_v$ with $b$ and $a$, respectively, and 
for each $v\in N_2$ we replace each occurrence
of $a_2$ and $b_2$ in $A_v$ with $a$ and $b$, respectively.
We then define $F(j,k)=(N, C,(A_v)_{v\in N},k)$.

Observe that for each $k\ge 1$ the number of voters in $F(j,k)$ is exactly $2^j$; this follows easily by induction since there are two voters in election $F(1,k)$, and for $j>1$, elections $F(j-1,k)$ and $F(j-1,k-1)$ have disjoint sets of voters of size $2^{j-1}$ each.

\subparagraph{Construction 2} [Election $E(j,k)$] Election $E(j,k)$ is built similarly to $F(j,k)$. We start with two copies of $F(j-1, k-1)$, merge them as in the construction for $F(j, k)$, 
introduce two new candidates $x$ and $y$, and make all voters from the first copy approve $x$
and all voters from the second copy approve~$y$. 

Formally, consider two disjoint copies of $F(j-1, k-1)$ given by
$(N_1,C_1,(A_v)_{v\in N_1},k)$ 
and $(N_2,C_2,(A_v)_{v\in N_2},k)$, where 
$C_1=D_{k-2}\cup \{a_1,b_1\}$, $C_2=D_{k-2}\cup \{a_2,b_2\}$,
and define a new election $(N,C,(A'_v)_{v\in N_1\cup N_2},k)$ as follows. 
Set $N=N_1\sqcup N_2$ and $C=D_{k-2}\cup\{x,y\}\cup\{a,b\}$. 
Again, modify the voters' ballots accordingly:
for each $v\in N_1$ replace each occurrence
of $a_1$ and $b_1$ in $A_v$ with $b$ and $a$, respectively, and 
for each $v\in N_2$ replace each occurrence
of $a_2$ and $b_2$ in $A_v$ with $a$ and $b$, respectively.
Then, set $A'_v=A_v\cup \{x\}$ if $v\in N_1$ and $A'_v=A_v\cup \{y\}$ if $v\in N_2$.
Let $s:N_1\mapsto N_2$ be the natural bijection between the sets of voters 
in the two isomorphic elections that $E(j,k)$ is built from; 
we will say that $s(v)$ is the counterpart of $v$, and $v$ is the counterpart of $s(v)$.
We collect a few simple properties of $E(j, k)$ in the following observation.
 \begin{proposition}\label{mapping}
 For all $1\le j< k$, the election $E(j,k)$ has the following properties:
    \begin{enumerate}
    \item $A_v\cap D=A_{s(v)}\cap D$,     
    \item $v$ approves $a$ (resp., $b$) if and only if $s(v)$ approves $b$ 
    (resp., $a$), and
    \item $v$ approves $x$ (resp., $y$) if and only if $s(v)$ approves $y$ 
    (resp., $x$).
    \end{enumerate}
\end{proposition}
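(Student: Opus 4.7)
The plan is to verify the three properties by unpacking Construction~2 and invoking the symmetry between the two copies of $F(j-1,k-1)$ from which $E(j,k)$ is glued. I would take $s$ to be the natural isomorphism between these copies, under which the dummy candidates in $D_{k-2}$ are identified with themselves and the pairs $\{a_1,a_2\}$, $\{b_1,b_2\}$ correspond. All three claims then reduce to tracking how the cross-wired ballot rewrites of Construction~2 transport approvals across this identification; no genuine induction on $j$ is needed beyond the isomorphism itself.

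Property~(3) is immediate from the last step of the construction: $x$ is appended to every ballot in $N_1$ and $y$ to every ballot in $N_2$, and neither $x$ nor $y$ occurs in any ballot of $F(j-1,k-1)$ (they are freshly introduced). So $v \in N_1$ approves $x$ but not $y$, while $s(v) \in N_2$ approves $y$ but not $x$, giving both biconditionals. Property~(1) is similarly routine: the rewrites in Construction~2 only substitute among $\{a_1,b_1,a_2,b_2\}$ or append $x$ or $y$, and thus leave all dummy approvals untouched. Hence $A_v \cap D$ equals the set of dummies $v$ approved in its copy of $F(j-1,k-1)$, and the isomorphism identifies this with the set of dummies $s(v)$ approved in its copy, i.e.\ with $A_{s(v)} \cap D$.

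Property~(2) is the one requiring a little more care. I would trace it as follows: under the isomorphism, $v$ approves $a_1$ iff $s(v)$ approves $a_2$; after the substitutions $a_1 \mapsto b$ on $N_1$ and $a_2 \mapsto a$ on $N_2$, this becomes ``$v$ approves $b$ iff $s(v)$ approves $a$''. The symmetric trace starting from $v$ approves $b_1$ iff $s(v)$ approves $b_2$, now rewritten by $b_1 \mapsto a$ on $N_1$ and $b_2 \mapsto b$ on $N_2$, yields ``$v$ approves $a$ iff $s(v)$ approves $b$''. The only real obstacle is bookkeeping the cross-wired substitutions, i.e.\ remembering that the rewrite on $N_1$ swaps the roles of $a$ and $b$ relative to the rewrite on $N_2$; once this table is fixed the verification is mechanical.
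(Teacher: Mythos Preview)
Your proposal is correct and is precisely the unpacking the paper has in mind; the paper itself states the proposition as an immediate observation from Construction~2 and gives no separate proof. Your trace through the cross-wired substitutions for property~(2) and the direct reads of properties~(1) and~(3) are exactly the intended verification.
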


Consider the committee $W=D_{k-2} \cup\{x, a\}$. The election $E(j,k)$ satisfies two important properties with respect to $W$, 
which will be needed in the proof of 
Theorem~\ref{thm:zerolow}. We state them in the following lemma. To make the lemma easier to use, we define $\delta:\mathbb{N}\times \mathbb{N}\mapsto \mathbb{R}$ as 
$$
\delta(j,k)=
\frac{j!}{\prod_{j'=0}^{j}(k-j')}. 
$$
\begin{restatable}[\appsymb]{lemma}{atomgroup}\label{atomgroup}    
For all $1\le j< k$, the election $E(j,k)$ and the committee $W=D_{k-2} \cup\{x, a\}$ 
have the following properties:
\begin{enumerate} 
    \item $\Delta(W,a,b)=\delta(j,k), $ and
    \item for every voter $v\in N$ we have $|A_v\cap W|\geq k-(j+1)$.
\end{enumerate}\end{restatable}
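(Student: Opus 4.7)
The plan is to proceed by induction on $j$ using an auxiliary claim about the simpler family $F(j,k)$: for $F(j,k)$ with committee $W_F = D_{k-1} \cup \{a\}$, one has $\Delta^{F(j,k)}(W_F, a, b) = \delta(j, k)$ and $|A_v \cap W_F| \ge k - j$ for every voter $v$. The base case $F(1, k)$ is a direct two-voter computation: voter $1$ (ballot $D_{k-1} \cup \{a\}$) contributes $-\tfrac{1}{k}$, voter $2$ (ballot $D_{k-2} \cup \{b\}$) contributes $\tfrac{1}{k-1}$, and they sum to $\delta(1, k)$. For the inductive step, the construction of $F(j,k)$ from $F(j-1,k)$ (relabeled $a_1 \to b,\; b_1 \to a$ on $N_1$) and $F(j-1,k-1)$ (relabeled $a_2 \to a,\; b_2 \to b$ on $N_2$) ensures that the total contribution of $N_1$ equals $-\Delta^{F(j-1,k)}(D_{k-1}\cup\{a_1\}, a_1, b_1) = -\delta(j-1, k)$ (the relabeling swaps the roles of $a$ and $b$, negating each $\Delta_v$), while the contribution of $N_2$ equals $\Delta^{F(j-1,k-1)}(D_{k-2}\cup\{a_2\}, a_2, b_2) = \delta(j-1, k-1)$. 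A short algebraic check yields $\delta(j-1, k-1) - \delta(j-1, k) = \delta(j, k)$, and the satisfaction bound propagates directly through the induction hypothesis applied to each copy.

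Turning to $E(j, k)$ with $W_E = D_{k-2} \cup \{x, a\}$, I would separate the contributions of $N_1$ and $N_2$ because $x$ introduces asymmetry. For $v \in N_2$, since $x \notin A_v$ and $y \notin W_E$, the effect of the swap $(a, b)$ on each $v \in N_2$ coincides exactly with the effect of $(a_2, b_2)$ on $D_{k-2} \cup \{a_2\}$ in $F(j-1, k-1)$, so summing yields $\delta(j-1, k-1)$ by the auxiliary claim. For $v \in N_1$, the extra approval of $x$ raises the baseline satisfaction by $1$, shifting the relevant denominators from $|A_v \cap D_{k-2}| + 1$ to $|A_v \cap D_{k-2}| + 2$, and the relabeling $a_1 \leftrightarrow b,\; b_1 \leftrightarrow a$ flips the sign of each voter's contribution. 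To capture both effects uniformly, I would introduce the parametric sum
\[
\Sigma_c(F(j, k)) = \sum_{v:\, b \in A_v,\, a \notin A_v} \frac{1}{|A_v \cap D_{k-1}| + c} - \sum_{v:\, a \in A_v,\, b \notin A_v} \frac{1}{|A_v \cap D_{k-1}| + c},
\]
so that $\Sigma_1(F(j, k)) = \delta(j, k)$ by the auxiliary claim, and the $N_1$ contribution to $\Delta^{E(j, k)}(W_E, a, b)$ equals $-\Sigma_2(F(j-1, k-1))$.

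The technical heart is establishing the closed form $\Sigma_c(F(j, k)) = \frac{j!}{\prod_{j'=0}^{j}(k + c - 1 - j')}$ for all integers $c \ge 1$, by induction on $j$ along the same lines as the auxiliary claim; the recursion $\Sigma_c(F(j, k)) = \Sigma_c(F(j-1, k-1)) - \Sigma_c(F(j-1, k))$ follows from the $N_1$/$N_2$ decomposition, and the closed form is verified by a common-denominator manipulation. Substituting $c = 2$ and combining with the $N_2$ contribution gives
\[
\Delta^{E(j, k)}(W_E, a, b) = \Sigma_1(F(j-1, k-1)) - \Sigma_2(F(j-1, k-1)) = \delta(j, k),
\]
the last equality being a short algebraic simplification with common denominator $k(k-1)\cdots(k-j)$ whose numerator telescopes to $(j-1)!\,j = j!$. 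The satisfaction bound $|A_v \cap W_E| \ge k - (j+1)$ follows at once from the auxiliary claim applied to the two copies of $F(j-1, k-1)$, which yields $|A_v \cap D_{k-2}| \ge k - j - 1$, after which the extra approval of $a$ (for $v \in N_2$ approving $a_2$) or of $x$ (for $v \in N_1$) provides the slack. The main obstacle I anticipate is recognizing that a direct induction on $E$ itself does not close, because the presence of $x$ shifts all denominators for $N_1$ by $1$; introducing the parametric family $\Sigma_c$ is precisely the right generalization, after which the proof reduces to routine factorial algebra.
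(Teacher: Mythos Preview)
Your approach to part~1 is correct but takes a more laborious route than the paper. Both proofs establish the auxiliary claim $\Delta^{F(j,k)}(D_{k-1}\cup\{a\},a,b)=\delta(j,k)$ by the same induction. Where they diverge is in the transfer to $E(j,k)$. You handle the $+1$ shift in the $N_1$-denominators by introducing the parametric family $\Sigma_c$ and running a second induction to obtain the closed form $\Sigma_c(F(j,k))=j!/\prod_{j'=0}^{j}(k+c-1-j')$, then combining $\Sigma_1(F(j{-}1,k{-}1))-\Sigma_2(F(j{-}1,k{-}1))=\delta(j,k)$. The paper avoids all of this by observing (via its Proposition on $F(j,k)$ vs.\ $F(j,k{-}1)$) that the ballot of each $v\in N_1$ in $F(j,k)$ differs from her ballot in $F(j{-}1,k{-}1)$ by exactly one extra dummy candidate; since that extra dummy lies in $W'=D_{k-1}\cup\{a\}$ just as $x$ lies in $W=D_{k-2}\cup\{x,a\}$, every voter has \emph{identical} satisfaction in the two settings, and $\Delta^{E(j,k)}(W,a,b)=\Delta^{F(j,k)}(W',a,b)$ directly. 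Your parametric generalisation buys a self-contained computation without needing that structural proposition; the paper's reduction buys a one-line transfer with no second induction.

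One correction on part~2: your auxiliary bound $|A_v\cap W_F|\ge k-j$ for $F(j,k)$ is off by one. Already in $F(1,k)$, voter~$2$ has ballot $D_{k-2}\cup\{b\}$, so $|A_2\cap W_F|=k-2$, not $k-1$. The correct bound (and the one the paper proves) is $|A_v\cap W_F|\ge k-(j+1)$. Fortunately this does not break your derivation for $E(j,k)$: applying the corrected bound to each copy of $F(j{-}1,k{-}1)$ still gives $|A_v\cap D_{k-2}|\ge k-j-2$, and adding back $x$ (for $N_1$) or $a$ (for those $v\in N_2$ approving it) recovers $|A_v\cap W_E|\ge k-(j+1)$ in every case.
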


Proposition~\ref{mapping}
together with Lemma~\ref{atomgroup} imply the following corollary.

\begin{corollary}\label{goodswap}
    Consider election $E(j, k)$ with $1\le j< k$
    and committees $W=D_{k-2}\cup\{x,a\}$ and $W'=D_{k-2}\cup \{y,b\}$.
    It holds that $\Delta(W,a,b)=\delta(j,k)$, and hence $(a,b)$ is a good swap.
    Moreover, $\Delta(W',b,a)=\delta(j,k)$, and hence $(b,a)$ is a good swap.   
\end{corollary}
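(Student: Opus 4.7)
My plan is to handle the two claims separately: the first is essentially immediate from the previous lemma, and the second reduces to it by an explicit symmetry argument using the bijection $s$ from Proposition~\ref{mapping}.

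First, the equality $\Delta(W,a,b)=\delta(j,k)$ is exactly the content of part~(1) of Lemma~\ref{atomgroup}. Since $\delta(j,k)=\frac{j!}{\prod_{j'=0}^{j}(k-j')}>0$ for $1\le j<k$, the swap $(a,b)$ strictly increases the PAV score, so it is a good swap in the sense required for $0^+$-ls-PAV.

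For the second statement, the plan is to exploit the involutive structure of $E(j,k)$ encoded by Proposition~\ref{mapping}. I would argue that the map $v\mapsto s(v)$ is a bijection on $N$ under which the roles of $a$ and $b$, and of $x$ and $y$, are exchanged, while approvals of dummies are preserved. Consequently, for every voter $v$,
\begin{align*}
|A_{s(v)}\cap W'| &= |A_v\cap W|, \\
|A_{s(v)}\cap (W'\cup\{a\}\setminus\{b\})| &= |A_v\cap (W\cup\{b\}\setminus\{a\})|,
\end{align*}
where $W=D_{k-2}\cup\{x,a\}$ and $W'=D_{k-2}\cup\{y,b\}$. Indeed, $s(v)$'s intersection with $D_{k-2}$ equals $v$'s by item~1 of Proposition~\ref{mapping}, and the contributions of the two non-dummy elements match by items 2 and 3.

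Since the PAV satisfaction of a voter depends only on $|A_v\cap W|$, the two displayed equalities give $\pavsc_{s(v)}(W')=\pavsc_v(W)$ and $\pavsc_{s(v)}(W'\cup\{a\}\setminus\{b\})=\pavsc_v(W\cup\{b\}\setminus\{a\})$, hence $\Delta_{s(v)}(W',b,a)=\Delta_v(W,a,b)$. Summing over all $v\in N_1$ and using that $s$ is a bijection onto $N_2$ (and the analogous statement for the voters in $N_2$ mapped to $N_1$ by $s^{-1}$), I obtain $\Delta(W',b,a)=\Delta(W,a,b)=\delta(j,k)$. Positivity of $\delta(j,k)$ then yields that $(b,a)$ is a good swap as well. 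There is no real obstacle here; the only care needed is to verify that Proposition~\ref{mapping} truly captures the symmetry between $(W,a,b)$ and $(W',b,a)$, which follows directly from how $E(j,k)$ was assembled by gluing two copies of $F(j-1,k-1)$ and attaching $x$ to $N_1$ and $y$ to $N_2$.
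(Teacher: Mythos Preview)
Your proposal is correct and follows exactly the approach the paper intends: the paper does not give an explicit proof of the corollary, merely stating that it follows from Proposition~\ref{mapping} together with Lemma~\ref{atomgroup}, and your argument is precisely the natural unpacking of that claim. The first part is Lemma~\ref{atomgroup}(1) verbatim, and your symmetry argument via the bijection $s$ (extended to an involution on $N$ using $s^{-1}$) is the intended way to derive the second part from Proposition~\ref{mapping}.
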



\subsection{Level up: Election $\boldsymbol{E^t(j,k)}$}\label{sec:level2}
We now combine $t$ copies of the election $E(j,k)$ into a single election, which we will call $E^t(j,k)$. This election is the building block in our final construction that is responsible for the up-and-down movement within columns, as shown in \Cref{swapseq}.

\subparagraph{Construction 3} [Election $E^t(j,k)$].
Let $t \in \mathbb{N}$. For each $i\in [t]$, we consider an election $E_i=(N_i,C_i,(A_v)_{v\in N_i},k)$, where $E_i$ has structure $E(j,k)$
with $C_i=D_{k-2}\cup\{a_i,b_i\}\cup \{x_i,y_i\}$.
For each $i= 1, \dots, t-1$, we identify $b_{i}$ with $a_{i+1}$,
and relabel $a_1,\ldots, a_t,b_t$ as $c_1,c_2,\ldots, c_{t+1}$.
Furthermore, we identify $x_i$ with $y_{i+1}$ as well as $y_i$ with $x_{i+1}$ for $i=1,\ldots t-1$, and we write $x=x_1=y_2=x_3\ldots$ and $y=y_1=x_2=y_3\ldots$.
We then set $C=\{c_1,c_2, \ldots, c_{t+1},x,y\}$.
Let $C_i^-=\{c_1, \dots, c_i\}$ and $C_i^+=\{c_{i+1}, \dots, c_{t+1}\}$.
For each $v\in N_i$, if $A_v\cap C_i^-\neq\varnothing$, 
we add all candidates in $C_i^-$ to $v$'s ballot, and if $A_v\cap C_i^+\neq\varnothing$, 
we add all candidates in $C_i^+$ to $v$'s ballot; that is, we set
$$
A'_v = A_v \cup \bigcup_{\substack{X\in\{C_i^-, C_i^+\}:\\ X\cap A_v\neq\varnothing}} X.
$$
Then, each voter $v\in N_i$ views the candidates $c_1,\ldots c_i$ as \textit{clones}: she either approves all or none of them.
Similarly, she views $c_{i+1},\ldots, c_{t+1}$ as clones, too. Finally, let $N=\sqcup_{i=1}^t N_i$,
and define $E^t(j,k)=(N,D_{k-2}\cup C, (A'_v)_{v\in N},k)$. Since each $E(j,k)$ is an election with $2^j$ voters, we can easily calculate the number of voters in $E^t(j,k)$.
\begin{proposition}\label{votenum}
The number of voters in $E^t(j,k)$ is $t2^j$.
\end{proposition}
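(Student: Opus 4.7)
The statement is essentially a counting exercise that chains together two observations already set up by the excerpt. My plan is to verify it in two short steps.

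First, I will establish that each atomic election $E(j,k)$ has exactly $2^j$ voters. Construction~2 builds $E(j,k)$ by taking two disjoint copies of $F(j-1,k-1)$, keeping all voters (only candidates $x,y$ are added and ballots are modified, not the voter set). The excerpt already observes that $F(j',k')$ has $2^{j'}$ voters for every admissible pair $(j',k')$, proved by induction on $j'$: the base case $F(1,k)$ has two voters, and the inductive step concatenates the disjoint voter sets of $F(j'-1,k)$ and $F(j'-1,k-1)$, each of size $2^{j'-1}$. Hence $|N(E(j,k))| = 2\cdot 2^{j-1} = 2^j$.

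Second, I invoke the definition of $E^t(j,k)$. Construction~3 fixes $t$ elections $E_1,\dots,E_t$, each with structure $E(j,k)$, and sets $N = \bigsqcup_{i=1}^t N_i$. The relabelling of candidates (identifying $b_i$ with $a_{i+1}$, $x_i$ with $y_{i+1}$, etc.) and the modification of ballots via clones do not alter the voter sets $N_i$; they only act on the candidate sets and on the approval ballots. Thus the voter sets remain pairwise disjoint of size $2^j$ each, giving
\[
|N(E^t(j,k))| \;=\; \sum_{i=1}^t |N_i| \;=\; t\cdot 2^j,
\]
as claimed.

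There is no real obstacle here: the only thing to be careful about is emphasising that the disjoint-union operation on voters is preserved under the candidate identifications performed in Construction~3, so that no voters are inadvertently merged or duplicated. Once that is noted, the count follows directly from the already-established size of $E(j,k)$.
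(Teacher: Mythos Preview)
Your proof is correct and follows exactly the paper's own reasoning: the paper states just before the proposition that ``since each $E(j,k)$ is an election with $2^j$ voters, we can easily calculate the number of voters in $E^t(j,k)$,'' and you have simply spelled out this calculation, including the (also already noted) fact that $F(j',k')$ has $2^{j'}$ voters. There is nothing to add.
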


Consider an election $E$ with structure $E^t(j,k)$ and voters $\sqcup_{i=1}^t N_i$, as in the above construction.
Since in election $E_i$ with structure $E(j,k)$ each voter $v$ in $N_i$ approves exactly one candidate from $\{a_i, b_i\}$, in $E^t(j,k)$ voter
$v$ also approves exactly one of $c_i$ and $c_{i+1}$. So, by the above construction, in $E^t(j,k)$ for each $v\in N_i$ we have either $A_v\cap C=\{c_1,\ldots, c_{i}\}$ or $A_v\cap C = \{c_{i+1},\ldots, c_{t+1}\}$. Consequently, in election $E^t(j,k)$ each of the swaps $(c_i,c_{i+1})$ and $(c_{i+1},c_{i})$ can only change the satisfaction of voters in $N_i\subset N$.

We can therefore make some useful observations regarding good swaps in election $E^t(j,k)$ with respect to the committee $W=\{d_1,\ldots,d_{k-2}, z, c\}$, where $z\in \{x,y\}$ and $c\in\{c_1, c_{t+1}\}$.

\begin{restatable}[\appsymb]{proposition}{goodseq}\label{goodseq}
\begin{enumerate}[(1)]
\item If $W_1=D_{k-2}\cup\{x,c_1\}$, then $(c_1,c_2),(c_2,c_3),\ldots, (c_t,c_{t+1})$ is a sequence of $t$ good swaps starting from $W_1$, increasing the PAV score by 
$
t \delta(j,k).$
\item If $W_2=D_{k-2}\cup \{y,c_{t+1}\}$, then $(c_{t+1},c_t),(c_t,c_{t-1}),\ldots, (c_2,c_1)$ is a sequence of $t$ good swaps
starting from $W_2$, increasing the PAV score by $
t \delta(j,k)$.
\item If $W_3=D_{k-2}\cup \{x,c_{t+1}\}$ then $\Delta(W_3,x,y)=-t\delta(j,k)$.
\item If $W_4=D_{k-2}\cup \{y,c_1\}$, then $\Delta(W_4,y,x)=-t\delta(j,k)$.
\end{enumerate}
\end{restatable}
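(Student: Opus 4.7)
The plan is to decompose the PAV-score change induced by each swap across the disjoint voter sets $N_1,\ldots,N_t$, and reduce each per-set contribution either to Corollary~\ref{goodswap} or to a short type-based computation inside $E_i\cong E(j,k)$.

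For claims~(1) and~(2), the cloning construction guarantees (as the paragraph preceding the proposition explains) that swap $(c_i,c_{i+1})$ changes the satisfaction only of voters in $N_i$. Tracing the committee evolution $\{x,c_1\}\to\{x,c_2\}\to\cdots\to\{x,c_{t+1}\}$ for part~(1), I would verify that each step $i$, restricted to $N_i$, matches in the local labelling of $E_i$ the configuration $D_{k-2}\cup\{x,a\}$ with swap $(a,b)$ of Corollary~\ref{goodswap}, contributing $\delta(j,k)$. Summing over $t$ swaps gives $t\delta(j,k)$. Claim~(2) is entirely analogous via the second identity of Corollary~\ref{goodswap}: each swap $(c_{i+1},c_i)$ from $\{y,c_{i+1}\}$ reduces in $E_i$ to $(b,a)$ from $D_{k-2}\cup\{y,b\}$, again contributing $\delta(j,k)$.

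For claims~(3) and~(4), the $(x,y)$ swap affects voters in every $N_i$, so I would write $\Delta(W_3,x,y)=\sum_{i=1}^t \Delta_{N_i}(W_3,x,y)$ and compute each term separately. Fix $i$ and partition the voters of $N_i$ into the four $E(j,k)$-types (which of $\{a_i,b_i\}$ and which of $\{x_i,y_i\}$ they approve). Using the cloning rule, a voter $v\in N_i$ approves $c_{t+1}\in W_3$ if and only if $v$ approves $b_i$; combined with the $x\leftrightarrow y$ exchange, this forces $\Delta_v\in\{\pm\frac{1}{d_v+1},\pm\frac{1}{d_v+2}\}$, where $d_v=|A_v\cap D|$. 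I would then invoke the bijection $s$ of Proposition~\ref{mapping}, which pairs type~$1$ voters with type~$4$ voters and type~$2$ with type~$3$ at equal $d_v$; each pair collapses to $\pm\frac{1}{(d_v+1)(d_v+2)}$, producing exactly the sum that equals $\delta(j,k)$ in the proof of Lemma~\ref{atomgroup}, but with opposite sign. Hence each $N_i$ contributes $-\delta(j,k)$, giving the total $-t\delta(j,k)$. Claim~(4) follows by the symmetric calculation with $c_1$ in place of $c_{t+1}$ and the roles of $a_i,b_i$ interchanged.

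The main obstacle is the type-by-type bookkeeping required for claims~(3) and~(4): writing down $|A_v\cap W|$ correctly in the presence of clones, verifying that Proposition~\ref{mapping}'s bijection matches voters with equal $d_v$, and confirming that the pairwise sums telescope into exactly the $\delta(j,k)$ formula from Lemma~\ref{atomgroup}. By contrast, claims~(1) and~(2) reduce almost mechanically to Corollary~\ref{goodswap} once the localisation of swaps to $N_i$ is exploited.
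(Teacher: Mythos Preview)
Your treatment of claims~(1) and~(2) matches the paper's: localise the swap $(c_i,c_{i+1})$ to $N_i$ and invoke Corollary~\ref{goodswap}. One detail you pass over (and the paper does too, to be fair) is that the construction identifies $x=x_1=y_2=x_3=\cdots$ and $y=y_1=x_2=y_3=\cdots$, so for even $i$ the global pair $\{x,c_i\}$ reads as $\{y_i,a_i\}$ in $E_i$'s local labelling, not $\{x_i,a_i\}$; your phrase ``matches \ldots\ the configuration $D_{k-2}\cup\{x,a\}$'' is only literally correct for odd $i$.

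For claims~(3) and~(4) your route is genuinely different from the paper's, and the paper's is much shorter. Rather than compute $\Delta_{N_i}(W_3,x,y)$ for each $i$, the paper observes that the four moves
\[
W_1\xrightarrow{\text{(1)}}W_3\xrightarrow{(x,y)}W_2\xrightarrow{\text{(2)}}W_4\xrightarrow{(y,x)}W_1
\]
form a closed loop, so the total PAV change is zero; combined with the evident symmetry $\Delta(W_3,x,y)=\Delta(W_4,y,x)$ and the values $t\,\delta(j,k)$ already established in~(1) and~(2), both~(3) and~(4) drop out in one line with no further calculation. Your direct type-by-type computation is certainly viable, but it forces you to redo the pairing analysis behind Lemma~\ref{atomgroup} for a different swap, and---because the global swap $(x,y)$ corresponds to $(x_i,y_i)$ for odd $i$ but to $(y_i,x_i)$ for even $i$---your assertion that ``each $N_i$ contributes $-\delta(j,k)$'' needs a separate case analysis that your sketch does not supply. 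The cycle argument sidesteps all of this bookkeeping.
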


\subsection{Final Election Instance $\boldsymbol{E}$}\label{sec:level3}
Let $k$ be the desired committee size, and
let $k_1=\lceil{\log k}\rceil$, $k_2=k-k_1$.
 We will construct an election $E=(N,C,(A_v)_{v\in N},k)$ together with an initial committee $W_0$ so that there exists a sequence of swaps of super-polynomial length starting from $W_0$.
 Briefly, $E$ is obtained by combining $k_1$ elections of the form $E^t(2j,k_2+1)$ for $j=1,\ldots, k_1$, with some modifications of the ballots.
 
\subparagraph{Constructing the Instance} 
Let $t=2\cdot\lceil\frac{k}{2}\rceil$, so that
$t\in\{k, k+1\}$ and $t$ is even. For each $i\in [k_1]$, let $C_i=\{c_{i,1},\ldots,c_{i,t+1},x_i,y_i\}$ and
\begin{align}
    E_i=(N_i,C_i\cup D_{k_2-1},(A_v)_{v\in N_i},k_2+1)\text{ with structure }E^t(2k_1-2(i-1),k_2+1).
\end{align}
The committee size of $k_2+1$ in this construction is chosen so that $k_2-1$ spots are reserved for dummy candidates $D_{k_2}$, one spot is reserved for one of $x_i$ and $y_i$, and the last spot is reserved for one of the candidates $c_{i,j}, 1\leq j \leq t+1$.
We define $E$ by merging these elections in the natural way, but we additionally modify some approvals. Informally, for all $i\in [k_1]$, we remove all candidates $x_i,y_i$ and let the candidates in $C_{i+1}\setminus\{x_{i+1},y_{i+1}\}$ take on the roles of $x_i$ and $y_i$ for voters $N_i$. That is, in election $E$ 
we modify the ballots so that 
for each $i<k_1$ it holds that 
all voters in $N_{i}$
who approve $x_i$ in $E_i$ instead approve of all $c_{i+1,j}$ with $j$ odd, and all voters in $N_{i}$ 
who approve $y_i$ in $E_i$ instead approve of all $c_{i+1,j}$ with $j$ even.  Furthermore, we add an additional dummy voter $d_{k_2}$ to the election and identify $d_{k_2}$ with $x_{k_1}$, so that every voter in $N_{k_1}$ who previously approved $x_{k_1}$ now approves $d_{k_2}$ instead.
More formally, let $E=(N,C,(A'_v)_{v\in N},k_1+k_2)$, where
\begin{align*}
    &N = \bigcup_{i=1}^{k_1}N_i, \qquad 
    C=D_{k_2}\cup\bigcup_{i=1}^{k_1}(C_i\setminus\{y_i,x_i\}), \text{ and }
\end{align*}
$$
A'_v = \begin{cases}
  A_v\cup \{c_{i+1,j} \mid  j \text{ is odd, } x_i\in A_v\text{ or }  j \text{ is even, } y_i\in A_v\}\setminus \{y_i,x_i\} & \text{ if }v\in N_{i}, i<k_1 \\
  A_v \cup \{d_{k_2}\mid x_{k_1}\in A_v\}\setminus\{x_{k_1},y_{k_1}\} & \text{ if }v\in N_{k_1}
  \end{cases}
  $$

To see why the size of $E$ is polynomial in $k$, note that
$|C|=k_2+k_1\cdot (t+1)\leq 2k \log k$, and 
by Proposition~\ref{votenum} we have $|N_i|\leq t 2^{2(\lceil{\log k}\rceil)-2(i-1)}\leq t \cdot 2^{2\lceil{\log k}\rceil}\leq 4k^2(k+1)$ and so $|N|=\sum_{i=1}^{k_1}|N_i|=\mathrm{poly}(k)$.

\subsection{Adversarial Better Response}\label{sec:abr}
We are now ready to
prove our lower bound of $\Omega(k^{\log k})$. 
We refer to our sequence of swaps as adversarial better response, because these are the swaps that an agent that points out improvements of the existing state, but acts adversarially, might choose to show us.
Let the initial committee be 
$$
W_0=(c_{1,t+1},c_{2,t+1},\ldots,c_{k_1-1,t+1}, c_{k_1,1},d_1,\ldots d_{k_2}), 
$$
so $|W_0|=k_1+k_2=k$. 
We will exhibit a sequence of good swaps that results in the final committee 
$$
(c_{1,t+1},c_{2,t+1},\ldots, c_{k_1,t+1},d_1,\ldots, d_{k_2}).
$$ 

Our basic building block is the sequence $\textbf{Y}_1=\oplus_{j=1}^{t}(c_{1,j},c_{1,j+1})$.
Let $\textbf{X}^1_1=\textbf{Y}_1$ and $\textbf{X}^0_1=\textbf{Y}^{-1}_1$.
Further, for $i>1$ define
$$
\textbf{X}^0_i=\oplus_{j=1}^{t}\left((c_{i,t-j+2},c_{i,t-j+1})\oplus\textbf{X}_{i-1}^{j-1\!\!\!\mod 2}\right), \quad
\textbf{X}^1_i=\oplus_{j=1}^{t}\left((c_{i,j},c_{i,j+1})\oplus\textbf{X}_{i-1}^{j-1\!\!\!\mod 2}\right). 
$$
Our proof shows that $0^+$-ls-PAV will perform the sequence of swaps $\textbf{X}^1_{k_1}$ when run on election $E$ (constructed in Section~\ref{sec:level3}) with initial committee $W_0$. That is, $\textbf{X}^1_{k_1}$ is a sequence of good swaps.
Further, since $|\textbf{X}^0_i|=|\textbf{X}^1_i|$, the length of $\textbf{X}^\ell_i$, $\ell\in\{0,1\}$, is $t(|\textbf{X}^\ell_{i-1}|+1)$ and $|\textbf{X}^\ell_1|=t\geq k$. Hence, $\textbf{X}^1_{k_1}$ has length $\Omega(t^{k_1})=\Omega(k^{\log k})$, i.e., it is a sequence of good swaps with super-polynomial length.

\subsection{Extension to a Fixed Pivoting Rule}\label{sec:nabr}
We adapt the proof of Theorem~\ref{thm:zerolow} to a natural non-adversarial setting. 
An intuitive method to select swaps is to consider a fixed ordering on the candidates $C=\{c_1,\ldots,c_m\}$, for example $c_1<\ldots<c_m$.
To find a good swap $(c',c)\in W\times (C\setminus W)$, we go over the candidates in $C\setminus W$, in the order suggested by $<$;
for each $c\in C\setminus W$, we go over candidates in $W$, in the order suggested by $<$, to find $c'$ such that $(c',c)$ is a good swap. That is, we consider a lexicographic ordering on pairs $(c,c')$ (where $c$ is to be added and $c'$ is to be removed from the committee) induced by the order $<$.
In light of this, we call the corresponding pivoting rule \textit{lexicographic better response}.
\begin{restatable}[\appsymb]{theorem}{lexresponse}\label{thm:nabr}

For any $k\geq 1$ there exists a committee election with $\mathrm{poly}(k)$ voters and
a committee $W_0$ such that
executing $0^+$-ls-PAV with lexicographic better response from $W_0$ results in $\Omega(k^{\log k})$ swaps.
\end{restatable}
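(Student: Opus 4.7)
\smallskip
\noindent\textbf{Proof proposal.} The plan is to reuse (with minor modifications if needed) the election $E$ constructed in Section~\ref{sec:level3} together with the initial committee $W_0$, and exhibit a lexicographic order $<$ on the candidates together with a long subsequence $\mathbf{X}^\star$ of the adversarial swap sequence $\mathbf{X}^1_{k_1}$ from Theorem~\ref{thm:zerolow}, such that running $0^+$-ls-PAV with lexicographic better response on $(E, W_0)$ actually executes $\mathbf{X}^\star$. Since $|\mathbf{X}^1_{k_1}|=\Omega(k^{\log k})$, it will suffice to preserve a constant fraction (or even a $1/\mathrm{poly}(k)$-fraction) of these swaps to obtain the desired super-polynomial lower bound.

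First, I would pick the order $<$ so that it is coherent with the recursive structure of $E$: dummy candidates $d_1<\dots<d_{k_2}$ are placed last (they are never touched by good swaps, so they never enter the pool of potential additions as long as every non-dummy committee slot is filled with a useful candidate), and within each $C_i$ the order $c_{i,1}<c_{i,2}<\dots<c_{i,t+1}$ respects the natural column index used in Proposition~\ref{goodseq}. The blocks $C_1,\dots,C_{k_1}$ must be ordered so that the ``inner'' blocks that drive the recursive up-and-down movement (small~$i$) are lexicographically \emph{smaller} than the ``outer'' blocks (large~$i$); this matches the fact that in $\mathbf{X}^1_{k_1}$ each outer swap within $C_i$ is followed by a full inner sequence over $C_1,\dots,C_{i-1}$. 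The exact direction (ascending vs.\ descending) of these choices is determined by checking which inner swap from Proposition~\ref{goodseq} the lex rule prefers when several good swaps coexist.

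Next, I would run an induction on the level $i\in[k_1]$, showing that, once the committee has the form prescribed by $\mathbf{X}^1_{k_1}$ just before an $i$-th level swap, the lexicographically smallest good swap is either (a) exactly the scheduled swap from $\mathbf{X}^1_{k_1}$, or (b) a ``harmless'' swap inside the same column $C_i$ that merely reshuffles between two good candidates and still reduces the problem to the next inductive step. Corollary~\ref{goodswap} and Proposition~\ref{goodseq} already give explicit expressions $\delta(j,k)$ for the PAV gains of each column-internal swap, and Lemma~\ref{atomgroup} guarantees that cross-column swaps either yield no gain or a gain strictly smaller than some column-internal swap at lower level; so the lex rule is forced to stay within a single column at each level, and cannot ``jump'' to a different block and collapse the recursion. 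If a small number of harmless swaps are forced, I would absorb them into the subsequence $\mathbf{X}^\star$ without loss.

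The hard part will be the case analysis showing that \emph{no unintended good swap exists} at the moments when the adversarial sequence makes a swap in a high-index block $C_i$: a priori, some swap in a lower block $C_{i'}$ with $i'<i$ could have a smaller lex rank and be good, which would derail the schedule. To handle this, I would either (i) enlarge the electorate so that good swaps in $C_{i'}$ are \emph{not} available exactly when we need a $C_i$-swap (using the fact that after a complete inner subsequence $\mathbf{X}^{\ell}_{i-1}$ the inner committee is in a state where no $C_{i'}$-swap increases the PAV score), or (ii) prepend a small gadget of voters and candidates that makes the lex-smallest $C_i$-swap globally smallest whenever inner swaps are exhausted. In both approaches the extra gadget only adds $\mathrm{poly}(k)$ voters and $\mathrm{poly}(k)$ candidates, so the overall instance stays polynomial and the length of the realised sequence remains $\Omega(k^{\log k})$, establishing the theorem.
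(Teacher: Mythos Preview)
Your overall plan---reuse the election $E$ from Section~\ref{sec:level3}, fix the lexicographic order so that inner columns $C_i$ (small $i$) precede outer ones and within each column $c_{i,1}<\dots<c_{i,t+1}$, and argue inductively that the lex rule traces a long path---is exactly the paper's approach. Your option~(i) for the ``hard part'' is also correct and sufficient: whenever the adversarial sequence schedules a swap in $C_i$, all groups $N_1,\dots,N_{i-1}$ are stable, so no column-internal swap in a lower block is good; no gadget is needed for that.

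There is, however, a genuine gap in how you handle \emph{cross-column} swaps. Your sentence ``Lemma~\ref{atomgroup} guarantees that cross-column swaps either yield no gain or a gain strictly smaller than some column-internal swap at lower level; so the lex rule is forced to stay within a single column'' is wrong on both counts. Lemma~\ref{atomgroup} says nothing about swaps $(a,b)$ with $a\in C_i$, $b\in C_j$, $i\neq j$; and even if it did, the lex rule is insensitive to the \emph{magnitude} of the gain---any cross-column swap with strictly positive gain that is lex-smaller than the intended swap would be selected and derail the recursion. In the unmodified election $E$ such swaps can indeed be good. The paper fixes this by \emph{modifying} $E$: for each $i\in[k]$ it adds a block $V_i$ of $\lceil 2\gamma\rceil$ voters (where $\gamma=\mathrm{poly}(k)$ is a trivial upper bound on any swap gain) who approve all of $C_i$ (resp.\ the dummy $d_{i-k_1}$ for $i>k_1$); this makes every cross-column and dummy-removing swap have nonpositive gain while leaving all within-column gains unchanged. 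Your gadget idea is the right instinct, but it must be deployed here, not for the inner-block issue.

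A second, smaller point: the sequence the lex rule actually executes is \emph{not} a subsequence of $\mathbf{X}^1_{k_1}$. Whenever an inner column $C_l$ must move ``down'' (the $\mathbf{X}^0_l$ part), the lex rule sees $c_{l,1}$ first and performs the single swap $(c_{l,t+1},c_{l,1})$, replacing the entire length-$\Omega(t^l)$ subsequence $\mathbf{X}^0_l$ by one step. So you do not ``absorb harmless extra swaps''; rather, large chunks of $\mathbf{X}^1_{k_1}$ collapse. The paper defines the resulting sequence $\mathbf{Z}^1_{k_1}$ explicitly and checks that after these shortcuts one still retains $|\mathbf{Z}^1_{k_1}|\ge (t/2)^{k_1-2}\cdot t^2/2 = \Omega(k^{\log k})$, because only the $\mathbf{X}^0$-halves are shortcut while the $\mathbf{X}^1$-halves survive intact.
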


Due to space constraints, we relegate the proof of Theorem~\ref{thm:nabr} to the extended version of the paper.
 
\section{Discussion}
We have shown that if $\varepsilon$ can be arbitrarily small, the running
time of $\varepsilon$-ls-PAV with lexicographic better response may be super-polynomial, resolving the open question of Aziz et al.~\cite{AE+18}. 
Thus, while using very small values of $\varepsilon$ would be attractive
both in terms of obtaining a more decisive rule and in terms of providing
fairness guarantees to small minorities of voters, this would come at a cost
of superpolynomial execution time in the worst case.
While a similar result for best response remains elusive, our simulations (see the extended version of the paper) shows that, at least empirically, better response is preferable to best response on both synthetic and real-world datasets. We note that our lower bound does not preclude the possibility
that an outcome of $0^+$-ls-PAV can be found in polynomial time by other means;
it is an interesting open question whether this is indeed possible. 


\bibliography{lipics-v2021-sample-article}
\appendix
\newpage
\section{Omitted Proofs}

We first establish a simple fact that will be useful for our analysis. First, note that $F(j,k-1)$ and $F(j,k)$ have the same set of voters $N=N_1\sqcup N_2$ where $|N|=2^{j}$. 

\begin{proposition}\label{prop:k-to-k-1}
    Consider $1\le j<k-1$ and elections $F(j, k-1)$ and $F(j, k)$ with voter set $N$. For each $v\in N$, denote the ballots of $v$ in $F(j, k-1)$ and $F(j, k)$ by $A_v$ and $A'_v$, respectively. Then $|A'_v\setminus A_v|=1$, and the unique candidate in $A'_v\setminus A_v$ is a dummy candidate. 
\end{proposition}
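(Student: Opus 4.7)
The plan is to prove Proposition~\ref{prop:k-to-k-1} by induction on $j$. In both the base case and the inductive step, the key observation is that the construction of $F(j,k)$ differs from that of $F(j,k-1)$ only in the size of the dummy-candidate pool at each level of the recursion, while all relabelings of $a_1,b_1,a_2,b_2$ to $a,b$ act on non-dummy candidates.

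For the base case $j=1$, I would simply read off the ballots from Construction~1. In $F(1,k-1)$ we have $A_1=D_{k-2}\cup\{a\}$ and $A_2=D_{k-3}\cup\{b\}$, while in $F(1,k)$ we have $A'_1=D_{k-1}\cup\{a\}$ and $A'_2=D_{k-2}\cup\{b\}$. Since $D_{\ell+1}=D_\ell\cup\{d_{\ell+1}\}$, we obtain $A'_1\setminus A_1=\{d_{k-1}\}$ and $A'_2\setminus A_2=\{d_{k-2}\}$, both of which are single dummy candidates, as required.

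For the inductive step, fix $j>1$ and assume the statement for $j-1$. By Construction~1, the election $F(j,k)$ is obtained from $F(j-1,k)$ (on voter set $N_1$) and $F(j-1,k-1)$ (on voter set $N_2$) by renaming $a_1,b_1$ to $b,a$ for $v\in N_1$ and $a_2,b_2$ to $a,b$ for $v\in N_2$. Analogously, $F(j,k-1)$ is built from $F(j-1,k-1)$ (on a voter set I will identify with $N_1$) and $F(j-1,k-2)$ (on a voter set I will identify with $N_2$) via the same type of renaming. Hence I can pair each voter $v\in N_1$ in $F(j,k)$ with its counterpart in the $F(j-1,k-1)$-copy used to build $F(j,k-1)$, and similarly for $v\in N_2$, so that across the two constructions the corresponding pair of sub-ballots comes from elections $F(j-1,k)$ vs $F(j-1,k-1)$ (for $v\in N_1$) or $F(j-1,k-1)$ vs $F(j-1,k-2)$ (for $v\in N_2$).

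By the inductive hypothesis, in each of these two pairs of sub-elections the corresponding ballots already differ in exactly one dummy candidate. The renaming steps in Construction~1 only rewrite the non-dummy symbols $a_1,b_1$ (resp.\ $a_2,b_2$) as $a$ or $b$, and this rewriting is identical in the two constructions, so it contributes no change to the symmetric difference. Thus for each $v\in N$ we still have $|A'_v\setminus A_v|=1$ with the unique extra candidate being a dummy, completing the induction. The main obstacle is purely notational---keeping track of the identification of voter sets between $F(j,k)$ and $F(j,k-1)$ so that the two inductive hypotheses apply to the right pairs of sub-elections---but no substantive computation is needed once this bookkeeping is in place.
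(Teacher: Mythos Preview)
Your proposal is correct and follows essentially the same approach as the paper: induction on $j$, with the base case read off directly from Construction~1 and the inductive step handled by tracing each voter $v\in N_1$ (resp.\ $v\in N_2$) back to the pair $F(j-1,k)$ vs.\ $F(j-1,k-1)$ (resp.\ $F(j-1,k-1)$ vs.\ $F(j-1,k-2)$) and invoking the hypothesis for $j-1$. Your explicit remark that the relabelings of $a_1,b_1,a_2,b_2$ only touch non-dummy candidates and are identical in both constructions is a detail the paper leaves implicit, but otherwise the arguments coincide.
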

\begin{proof}

    The proof is by induction on $j$; we prove our claim for all $k>j+1$ simultaneously.
    
    For $j=1$, we have $N=\{1, 2\}$. 
    In $F(1, k)$, the voters' ballots are $A_1=D_{k-1}\cup\{a\}$ and $A_2=D_{k-2}\cup\{b\}$, whereas 
    in $F(1, k-1)$, the voters' ballots are $A'_1=D_{k-2}\cup\{a\}$ and $A'_2=D_{k-3}\cup\{b\}$ (with $D_0=\varnothing$), so our claim holds.
    
    Now, suppose our claim has been established for all $j'<j$ and all $k>j+1$.
    
    If $v\in N_1$ then in $F(j,k)$ $v$ inherits her approval ballot from $F(j-1,k)=(N_1, D_{k-1}\cup \{a,b\},A_{v\in N_1}, k)$
    and in $F(j,k-1)$ inherits her approval ballot from $F(j-1,k-1)=(N_2, D_{k-1}\cup \{a,b\},A_{v\in N_2}, k-1)$, and so applying the inductive hypothesis with $j'=j-1$ and we conclude the claim for $v\in N_1$,
    Similarly, if $v\in N_2$ then in $F(j,k)$ $v$ inherits her approval ballot from $F(j-1,k-1)=(N_1, D_{k-1}\cup \{b,a\},A_{v\in N_1}, k-1)$
    and in $F(j,k-1)$ inherits her approval ballot from $F(j-1,k-2)=(N_2, D_{k-2}\cup \{b,a\},A_{v\in N_2}, k-2)$, o again applying the inductive hypothesis to $j'=j-1$, we conclude that the claim holds for $v\in N_2$.
    We conclude that the claim holds for $j$ and $k>j+1$.
    
\end{proof}

We are now ready to present the omitted proofs.
\atomgroup*
\begin{proof}
First, we will argue that to prove the claim for $E(j,k)$ with $W=D_{k-2}\cup\{x,a\}$, it suffices to prove the claim for $F(j,k)$ with $W'= D_{k-1} \cup\{a\}$.

Indeed, recall that the set of voters in both elections is $N=N_1\sqcup N_2$ with $|N_1|=|N_2|=2^{j-1}$. 
Consider a voter $v\in N_1$. In $F(j, k)$ she inherits her ballot from $F(j-1, k)$, whereas in $E(j, k)$ she inherits her ballot from $F(j-1, k-1)$; in both cases, she modifies it by relabeling $a_1$ and $b_1$ as $b$ and $a$, and, in case of $E(j, k)$, she also adds $x$ to her ballot. Applying Proposition~\ref{prop:k-to-k-1} and observing that $D_{k-1}\subseteq W'$, $x\in W$, we conclude that the number of candidates in $W$ that $v$
approves in $E(j, k)$ equals to the number of candidates in $W'$ that she approves in $F(j, k)$; moreover, she approves the same subset of $\{a, b\}$ in both elections. 
On the other hand, voters in $N_2$ approve $y\notin W$ in $E(j,k)$, but not in $F(j,k)$, and, apart from $y$, they approve the same candidates in both elections.

By focusing on $F(j, k)$, which is defined inductively, 
we can prove both claims by induction on $j$; for readability, we denote the committee $D_{k-1}\cup\{a\}$ by $W$.\\
\noindent\textbf{Base case:} For $j=1$, we have $|A_1\cap W|=k$ and $|A_2\cap W|=k-2$, since voter $1$ approves all members of $W$, while voter $2$ approves everyone but $d_{k-1}$ and $a$.
Thus, the second claim of the lemma holds. For the first claim, observe that 
\begin{align*}
\Delta(W,a,b)=+\frac{1}{k-1}-\frac{1}{k}=\frac{1}{k(k-1)}=\delta(1,k).
\end{align*}
\noindent\textbf{Inductive Step:} Let $j\geq 1$, and suppose the statement of the lemma is true for $j$ and arbitrary $k>j$; we will show that it is true for $j+1$.
 Consider elections $F(j,k)$ and $F(j,k-1)$ with  $F(j,k)=(N_1,C_1,(A_v)_{v\in N_1},k)$, $F(j,k-1)=(N_2,C_2,(A_v)_{v\in N_2},k-1)$, and
$N_1\cap N_2= \varnothing$. Let $W_1=D_{k-1}\cup\{a_1\}$ and $W_2=D_{k-2}\cup\{a_2\}$. Applying the inductive hypothesis to $F(j,k)$ with committee $W_1$ and to $F(j,k-1)$ with committee $W_2$, we obtain
    	\begin{align*}
		& \Delta_{N_1}(W_1,a_1,b_1)=\delta(j,k)=\frac{j!}{\prod_{i=0}^{j}(k-i)}\text{ and }\\
		& \Delta_{N_2}(W_2,a_2,b_2)=\delta(j,k-1)=\frac{j!}{\prod_{i=0}^{j}(k-1-i)}.
	\end{align*}
Recall that the election $F(j+1,k)$ 
is constructed from elections $F(j,k)$ and $F(j,k-1)$ so that its set of voters is $N=N_1\sqcup N_2$ and candidates $a_1, b_1$ and $a_2, b_2$
are replaced with $b, a$ and $a, b$, respectively.
Thus, for $F(j+1,k)$ and  
$W=D_{k-1}\cup \{a\}$ we have
\begin{align}
		& \Delta_{N_1}(W,a,b)=\Delta_{N_1}(W_1\cup\{b_1\}\setminus\{a_1\},b_1,a_1)=-\Delta_{N_1}(W_1,a_1,b_1)=-\frac{j!}{\prod_{i=0}^{j}(k-i)}, \label{casek}\\
		& \Delta_{N_2}(W,a,b)=\Delta_{N_2}(W_2,a_2,b_2)=\frac{j!}{\prod_{i=0}^{j}(k-1-i)}.\label{casekm1}
\end{align}
 We conclude that in election $F(j+1,k)$ for committee $W$ we have
 \begin{align*}
 \Delta_{N}(W,a,b)
 &=\Delta_{N_1}(W,a,b)+\Delta_{N_2}(W,a,b)\\		
 & = -\frac{j!}{\prod_{i=0}^{j}(k-i)} + \frac{j!}{\prod_{i=0}^{j}(k-1-i)}=\frac{(k-(k-1-j))j!}{\prod_{i=0}^{j+1}(k-i)}\\
 &=\frac{(j+1)!}{\prod_{i=0}^{j+1}(k-i)}=\delta(j+1,k).
\end{align*}
  This proves the first part of the lemma. To prove the second part, consider first a voter $v\in N_1$
  and elections $F(j, k)$ and $F(j+1, k)$.  By the inductive hypothesis, $v$ approves at least
  $k-(j+1)$ candidates in $W_1$.
  As $v$ approves the same dummy candidates in both elections 
  and exactly one candidate from $\{a, b\}$, and
  $W=W_1\setminus\{a_1\}\cup\{a\}$, 
  it follows that $v$ approves at least $k-(j+1)-1=k-(j+2)$ candidates in $W$.
  Now, consider a voter $v\in N_2$. 
  By the inductive hypothesis, 
  in election $F(j,k-1)$ she 
  approves at least $k-1-(j+1)$
  candidates in $W_2$. In $F(j+1, k)$ she approves the same dummy candidates, and she approves $a$
  if and only if she approved $a_2$ in $W_2$.
   Since $W_2=D_{k-2}\cup\{a_2\}$, 
   $W=D_{k-1}\cup\{a\}$, 
   we conclude that $v$ approves at least  $k-1-(j+1)=k-(j+2)$
   candidates in $W$. We conclude that in election $F(j+1,k)$, for every $v\in N$ we have $|A_v\cap W|\geq k-(j+2)$, as desired.

\end{proof}
\goodseq*
\begin{proof}
Items (1) and (2) follow by applying Corollary~\ref{goodswap} for each $i\in[t]$.
Indeed, we have previously argued that the swaps $(c_{i},c_{i+1})$ or $(c_{i+1},c_i)$ only change the satisfaction of voters $N_i\subset N$. 
Furthermore none of $c_j, j\neq i,i+1$, are in the committee at any point of the sequence, so we may ignore them.
But then restricting $E^t(j,k)$ to $N_i$ and candidates $\{d_1,\ldots,d_{k-1},x,y,c_{i+1},x_{i+1}\}$ recovers an election isomorphic to $E(j,k)$ so that we can apply Corollary \ref{goodswap}
$t$ times to see that $\sum_{i=1}^{t} \Delta(W_1,c_{i},c_{i+1})= t \delta(j,k)$ and $\sum_{i=1}^{t} \Delta(W_2,c_{i},c_{i+1})= t \delta(j,k)$.

For items (3) and (4), note that by symmetry $\Delta(W_3,x,y)=\Delta(W_4,y,x)$ and furthermore 
\begin{align*}
	\sum_{i=1}^{t} \Delta(W_1,c_{i},c_{i+1})+\Delta(W_3,x,y)+\sum_{i=1}^{t} \Delta(W_2,c_{i},c_{i+1})+\Delta(W_4,y,x)=0,
\end{align*}
 as by executing this sequence of swaps we end up where we started, namely with committee $W_1$, implying that $\Delta(W_4,y,x)=\Delta(W_3,x,y)=-t\delta(j,k)$.
\end{proof}
\zerolow*
\begin{proof}
In this proof, we view committees as ordered $k$-tuples, as it will be useful to number the committee positions.
We will only consider committees of the form 
$$
(a_1,a_2,\ldots, a_{k_1},d_1,d_2,\ldots, d_{k_2}), 
\quad\text{where $a_i\in C_i\setminus D_{k_2}$, $1\leq i\leq k_1$.}
$$
We will refer to $a_i$ as the $C_i$-candidate of the committee, i.e., the candidate in committee position $i$, and say that the $C_i$-candidate has {\em odd} (resp. {\em even}) index if $i$ is odd (resp. even). Our super-polynomial swap sequence will consist of swaps of the form $(a,b)$, where $a,b\in C_i\setminus D_{k_2}$ for some  $i\in [k_1]$.
Given a committee $W$ of the form  above, we will say a voter group $N_i$, $i<k_1$, is
\textit{stable} if 
(1) $W$ contains $c_{i,1}$ and $c_{i+1,j}$ where $j$ is even or
(2) $W$ contains $c_{i,t+1}$ and $c_{i+1,j}$ and $j$ is odd.
We say that $N_{k_1}$ is {\em stable} if $c_{k_1,t+1}$ is on the committee.
By Proposition~\ref{goodseq}, if $N_i$ is stable then no swap $(a,b)$ with $a,b\in C_i$ is a good swap.
Let the initial committee be $$W_0=(c_{1,t+1},c_{2,t+1},\ldots,c_{k_1-1,t+1}, c_{k_1,1},d_1,\ldots d_{k_2})$$
so $|W_0|=k_1+k_2$. 
In $W_0$, all of the voter groups $N_i$, $i<k_1$, are stable, and $N_{k_1}$ is unstable. We will exhibit a sequence of good swaps that results in the final committee 
$$
(c_{1,t+1},c_{2,t+1},\ldots, c_{k_1,t+1},d_1,\ldots, d_{k_2}),
$$ 
where all voter groups $N_i$ are stable since $t+1$ is odd. We will call $N_{i-1}$ the \textit{predecessor} group of $N_i$. We will say a swap sequence {\em destabilises}  $N_{i}$ if prior to its execution it was stable and upon its execution $N_i$ is no longer stable.

To show that the sequence $\textbf{X}^1_{k_1}$ is a sequence of good swaps, we use induction on $i<k$ to show the following.
Suppose the groups of voters $N_1,\ldots N_{i-1}$ are stable and $N_{i}$ is the group of voters with the smallest index that is not stable (so if $i=1$, simply $N_1$ is not stable).
We want to prove inductively that in this case one of the swap sequences $\textbf{X}^1_i$ and $\textbf{X}^0_i$ is good.

To prove the claim for $i$, we split the argument into two cases. Consider committee $W$ of the form $$W=(c_{1,t+1},\ldots c_{i-1,t+1},c_{i,j},c_{i+1,j'},\ldots a_{k_1-1},a_{k_1}, d_1,\ldots,d_{k_2-1},d_{k_2})$$ where $j=1$ and $j'$ is odd
or $j=t+1$ and $j'$ is even. As before $a_\ell\in C_\ell\setminus D_{k_2}$, $k_1\geq \ell>i+1$. We show that in the former case, i.e., if the $C_{i+1}$-candidate in $W$ has odd index, then $\textbf{X}^1_i$ is a sequence of good swaps and in the latter case, i.e., if the $C_{i+1}$-candidate in $W$ has even index, then $\textbf{X}^0_i$ is a sequence of good swaps.

We prove the claim inductively on $i$. Suppose $N_1$ is not stable. If $c_{1,1}\in W$, then since only voters in $N_1$ approve candidates in $C_1$, 
by applying Proposition~\ref{goodseq} we conclude that the swap sequence $$\textbf{X}_1^{1}=(c_{1,1},c_{1,2}),\ldots,(c_{1,t}, c_{1,t+1})$$ is a sequence of good swaps, increasing the PAV score by $t\delta(2\lceil{\log k}\rceil,k_2+1)$. If $c_{1,t+1}\in W$, then by stability also $c_{2,j}\in W$ where $j$ is even.
In this case, also by Proposition \ref{goodseq}, $$\textbf{X}^0_1=(c_{1,t+1},c_{1,t}),\ldots, (c_{1,2},c_{1,1})$$ is a sequence of good swaps.

Now suppose $i>1$, voter groups $N_1,\ldots N_{i-1}$ are stable, 
and $N_i$ is unstable. We will consider two cases; in both cases we show that there is a good swap involving the $C_i$ candidate that destabilises the group $N_{i-1}$, allowing us to apply the inductive hypothesis.

Consider first the case
where $$W=(c_{1,t+1},\ldots c_{i-1,t+1},c_{i,j},c_{i+1,j'},\ldots, ,a_{k_1},d_1,\ldots,d_{k_2-1},d_{k_2})$$ where $j\leq t$ and both $j$ and $j'$ are odd.
We prove that $(c_{i,j},c_{i,j+1})$ is a good swap, i.e., it holds that $\Delta(W,c_{i,j},c_{i,j+1})>0$. From Lemma \ref{atomgroup}, it follows that 
$$
\Delta_{N_i}(W,c_{i,j},c_{i,j+1})=\delta(2\lceil{\log k}\rceil-2(i-1),k_2+1).
$$
Furthermore, by construction of $E$, none of the groups $N_j$, $j\notin\{i,i-1\}$, approve either of $c_{i,j},c_{i,j+1}$ so that $\Delta_{N_j}(W,c_{i,j},c_{i,j+1})=0$. So showing that $(c_{i,j},c_{i,j+1})$ is a good swap therefore boils down to showing that 
\begin{align*}
&\Delta_N(W,c_{i,j},c_{i,j+1})=\Delta_{N_{i-1}\sqcup N_i}(W,c_{i,j},c_{i,j+1})\\&=\delta(2\lceil{\log k}\rceil-2(i-1)),k_2+1)-t\delta(2\lceil{\log k}\rceil-2(i-2)),k_2+1)>0.
\end{align*}
We do this in the following lemma.

\begin{lemma}\label{gain}
$\delta(2\lceil{\log k}\rceil-2(i-1),k_2+1)>t\delta(2\lceil{\log k}\rceil-2(i-2),k_2+1)$.
\end{lemma}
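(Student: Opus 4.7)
The plan is to unfold the definition of $\delta$, reduce the claim to an elementary polynomial inequality, and then verify it by comparing leading orders in $k$.

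First, I would set $a := 2\lceil\log k\rceil - 2(i-1)$ and $K := k_2 + 1$, so the target reads $\delta(a, K) > t\,\delta(a+2, K)$. Plugging in $\delta(j, K) = j!/\prod_{j'=0}^{j}(K-j')$, the factorials and products telescope and I would check that
\[
\frac{\delta(a+2, K)}{\delta(a, K)} \;=\; \frac{(a+1)(a+2)}{(K-a-1)(K-a-2)}.
\]
Thus the claim is equivalent to the clean inequality
\[
(K-a-1)(K-a-2) \;>\; t\,(a+1)(a+2).
\]

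Next, I would estimate both sides from the explicit parameters. Since the lemma is invoked in the inductive step with $i \geq 2$, we have $a \leq 2\lceil\log k\rceil - 2$, whence $(a+1)(a+2) \leq 2\lceil\log k\rceil\,(2\lceil\log k\rceil - 1) = O(\log^2 k)$; combined with $t \leq k+1$, the right-hand side is $O(k\log^2 k)$. For the left-hand side, substituting $K = k - \lceil\log k\rceil + 1$ gives
\[
K - a - 1 \;\geq\; k - 3\lceil\log k\rceil + 3, \qquad K - a - 2 \;\geq\; k - 3\lceil\log k\rceil + 2,
\]
so once $3\lceil\log k\rceil \leq k/2$ (which holds for all $k$ above a fixed absolute constant), each factor exceeds $k/2$ and the left-hand side is $\Omega(k^2)$.

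Since $\Omega(k^2)$ dominates $O(k\log^2 k)$ for all sufficiently large $k$, the inequality holds uniformly over the range of $i \in \{2, \dots, k_1\}$ that arises in the induction. The only part demanding real care is the telescoping identity in the opening step; the rest is routine bounding. There is also the mild bookkeeping obligation of noting that the implicit lower threshold on $k$ is harmless, since the $\Omega(k^{\log k})$ guarantee in Theorem~\ref{thm:zerolow} is itself asymptotic and is vacuously satisfied for $k$ below any fixed constant.
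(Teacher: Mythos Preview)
Your proposal is correct and follows essentially the same approach as the paper: both reduce the claim via telescoping to the equivalent inequality $(K-a-1)(K-a-2) > t(a+1)(a+2)$ and then observe that the left side is $\Omega(k^2)$ while the right side is $O(k\log^2 k)$. Your substitution $a,K$ makes the exposition slightly cleaner, and there is an inconsequential off-by-one in your lower bounds (e.g., $K-a-1 = k - \lceil\log k\rceil - a \geq k - 3\lceil\log k\rceil + 2$, not $+3$), but the asymptotic comparison is unaffected.
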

\begin{proof}
By construction $2(i-1)\leq 2\lceil{\log k}\rceil$. Also, $t\leq k+1$ and for large enough $k$, we see that
	\begin{align*}
    &t\cdot\frac{(2\lceil{\log k}\rceil-2(i-2))!}{{\prod_{i=0}^{2\lceil{\log k}\rceil-2(i-2)}k_2+1-i}} < \frac{(2\lceil{\log k}\rceil-2(i-1))!}{\prod_{i=0}^{2\lceil{\log k}\rceil-2(i-1)}k_2+1-i}\\\iff&
	t\cdot (2\lceil{\log k}\rceil-2(i-2))(2\lceil{\log k}\rceil-2i+3)\\ &< (k_2+1-2\lceil{\log k}\rceil+2(i-2))(k_2+1-2\lceil{\log k}\rceil+2i-3),\end{align*}
 so since the leading term on the LHS is $O(k \log^2 k)$ as $t\leq k+1$ and the leading term on the RHS is $\Omega(k^2)$, the inequality holds for sufficiently large $k$.

\end{proof}
So $(c_{i,j},c_{i,j+1})$ is a good swap and
results in the committee 
$$
W=(c_{1,t},\ldots c_{i-1,t+1},c_{i,j+1},c_{i+1,j'},\ldots 
a_{k_1-1},a_{k_1},d_1,\ldots,d_{k_2-1},d_{k_2}), 
$$ 
which is not stable for $N_{i-1}$, as the $C_i$ candidate has even index $j+1$. By the inductive hypothesis, 
 $\textbf{X}^0_{i-1}$ is a sequence of good swaps that results in the committee 
 $$
 (c_{1,t+1},\ldots,c_{i-2,t+1}, c_{i-1,1},c_{i,j+1},c_{i+1,j'},\ldots a_{k_1-1},a_{k_1},d_1,\ldots,d_{k_2-1},d_{k_2}).
 $$
Now suppose the current committee is
$$
W=(c_{1,t+1},\ldots,c_{i-2,t+1}, c_{i-1,1},c_{i,j},c_{i+1,j'},\ldots a_{k_1-1},a_{k_1},d_1,\ldots,d_{k_2-1},d_{k_2}),
$$ 
where $j\leq t$ is even and $j'$ is odd. The argument that $\Delta_N(W,c_{i,j},c_{i,j+1})>0$ is the same as before: 
\begin{align*}
\Delta_N(W,c_{i,j},c_{i,j+1})
&=\Delta_{N_{i-1}\cup N_i}(W,c_{i,j},c_{i,j+1})\\
&=\delta(2\lceil{\log k}\rceil-2(i-1)),k_2+1)-t\delta(\log k-2(i-2)),k_2+1)>0
\end{align*}
by Lemma \ref{gain}. So we move to the committee 
$$
W=(c_{1,t+1},\ldots,c_{i-2,t+1}, c_{i-1,1},c_{i,j+1},c_{i+1,j'},\ldots a_{k_1-1},a_{k_1},d_1,\ldots,d_{k_2-1},d_{k_2}).
$$
However, in this case $j+1$ is odd, and so by the inductive hypothesis $\textbf{X}^1_{i-1}$ is a sequence of good swaps resulting in the committee 
$$
(c_{1,t+1},\ldots c_{i-1,t+1},c_{i,j+1},c_{i+1,j'},\ldots a_{k_1-1},a_{k_1},d_1,\ldots,d_{k_2-1},d_{k_2}).
$$
Note that in the resulting committees $N_i$ is unstable, and we can keep repeating our argument until the $C_{i}$ candidate has index $t$.
This proves that if $j=1$ (and $j'$ is odd), then the sequence of swaps 
$$
\textbf{X}_i^1=
\oplus_{j=1}^{t}\left((c_{i,j},c_{i,j+1})\oplus\textbf{X}_{i-1}^{j-1 \mod 2}\right)
$$
is a sequence of good swaps resulting in the committee $$W=(c_{1,t+1},c_{2,t+1},\ldots, c_{i,t+1},c_{i+1,j'},\ldots,d_1,\ldots, d_{k_2-1},d_{k_2}).$$

\noindent The case where $j'$ is even can be shown similarly. In particular, consider the committee $$W=(c_{1,t+1},\ldots c_{i-1,t+1},c_{i,j},c_{i+1,j'},\ldots, ,a_{k_1}, d_1,\ldots,d_{k_2-1},d_{k_2})$$ where $j>0$ is odd and $j'$ is even. Again, the swap $(c_{i,j},c_{i,j-1})$ is a good swap and destabilises the voters $N_{i-1}$. Indeed, by construction only voters in $N_i$ and $N_{i-1}$ 
can distinguish $c_{i,j}$ and $c_{i,j-1}$, so
\begin{align*}\Delta(W,c_{i,j},c_{i,j-1})=\Delta_{N_{i-1}\sqcup N_i}(W,c_{i,j},c_{i,j-1})\end{align*}
and so we can apply Proposition \ref{goodseq} and Corollary \ref{goodswap} to the induced instances together with Lemma \ref{gain} to conclude that $\Delta(W,c_{i,j},c_{i,j-1})=\delta(2\lceil{\log k}\rceil-2(i-1)),k_2+1)-t\delta(2\lceil{\log k}\rceil-2(i-2)),k_2+1)>0$.
Then, by the inductive hypothesis the sequence of swaps $\textbf{X}^1_{i-1}$ is a good sequence of swaps.\\
If instead, the committee initially is $$W=(c_{1,t+1},\ldots c_{i-1,1},c_{i,j},c_{i+1,j'},\ldots, ,a_{k_1},d_1,\ldots,d_{k_2-1},d_{k_1})$$ where $j>0$ is even and $j'$ are even, then by an analogous argument the swap $(c_{i,j},c_{i,j-1})$ is good and destabilises $N_{i-1}$, so by the inductive hypothesis $\textbf{X}^0_{i-1}$ is a sequence of good swaps.

Finally, note that Lemma~\ref{gain} also applies to $i=k_1$, meaning that if we have a committee $W$ with $c_{k_1,j}\in W, j\leq t$, which is stable for $N_1,\ldots, N_{k_1-1}$, the swap $(c_{k_1,j},c_{k_1,j+1})$ is good.
If $j$ is even, we just proved that the sequence of swaps $\textbf{X}_{k_1-1}^1$ is a sequence of good swaps and if $j$ is odd, then by the inductive hypothesis $\textbf{X}_{k_1-1}^0$ is a sequence of good swaps. 
This proves that indeed $\textbf{X}_{k_1}^1$ is a sequence of good swaps for initial committee $W_0$ and we are done.
\end{proof}
\smallskip
\lexresponse*
\smallskip
\begin{proof}
Consider election $E$ from the proof of Theorem \ref{thm:zerolow} and the linear order $<$ on the set of candidates that satisfies 
\begin{itemize}
\item $c_{i,j}<c_{i,j+1}$,
\item $c_{i,j}<c_{i+1,j'}$ for all $j,j'$ and 
\item $c_{i,j}<d_1<\ldots<d_{k_2}$ for all $i,j$.
\end{itemize}
We will slightly modify $E$ by adding $\mathrm{poly}(k)$ voters. In particular, these voters cannot distinguish between the candidates within a set $C_i$, so that any swaps involving two candidates in $C_i$ are unaffected by these new voters. Their purpose is to prevent swaps $(a,b)$
 where $a\in C_{i}\cup D_{k_2}, b\in C_j$, $j\neq i$ from being executed by $0^+$-ls-PAV with the lexicographic pivoting rule.
Suppose we have a bound $\gamma$ such that at any point of the swap sequence no swap can increase the PAV score by more than $\gamma$. That is, for committee $W$ resulting from the execution of a prefix of the swap sequence $\textbf{X}^1_{k_1}$, it holds that whenever $c\in W, c'\notin W$, then  $\Delta(W,c,c')\leq \gamma.$
Assuming for now we have such a bound, we add voters disjoint sets of voters $V_i$, $i\in [k]$ with $|V_i|= \lceil{2\gamma}\rceil$. For $1\leq i \leq k_1$, the voters in $V_i$ approve of all candidates in $C_i$ , while voters in $V_i, i>k_1$, approve $d_{i-k_1}$.
Let $c_i\in C_i, c_j\in C_j$ and $i\neq j$ and assume $c_i\in W$ where $W$ is the result of the execution of some prefix of the swap sequence $\textbf{X}^1_{k_1}$.
This means that $W$ contains exactly one candidate from $C_i$, namely $c_i$, and exactly one candidate from $C_j$ but not $c_j$.
We observe that $$\Delta_{V_i\cup V_j}(W, c_i, c_j)= -\lceil{2\gamma}\rceil +\frac{1}{2} \lceil{2\gamma}\rceil=-\frac{\lceil{2\gamma}\rceil}{2}\leq -\gamma,$$
so \begin{align*}\Delta_{N\cup_{l=1}^{k} V_l}(W,c_i,c_j)=\Delta_{N}(W,c_i,c_j)+\Delta_{V_i\cup V_j}(W,c_i,c_j)\leq \gamma -\gamma =0,\end{align*}
implying that (as a result of the presence of the additional voters) the swap $(c_i,c_j)$ is not a good swap.
Similarly for $i>k_1$, $$\Delta_{N \cup_{l=1}^{k}V_i}(W,d_i,c_j)=\Delta_{N}(W,d_i,c_j)+\Delta_{V_i\cup V_j}(W,d_i,c_j)\leq \gamma-\gamma =0,$$ so $(d_i,c_j)$ is not a good swap for $W$.
Since we only need a bound $\gamma$ that is $\mathrm{poly}(k)$, for simplicity we can use the trivial upper bound that the addition of a candidate $c_j$ improves the contribution of its supporters to the PAV score by at most $1$ in $E$. Since $c_j$ is supported by a subset of voters in $N_j$ and if $j>1$ also $N_{j-1}$, a trivial bound is 
$\Delta(W,c_i,c_j)\leq \Delta(W,c_j)\leq 2|N_j| =2\cdot 4 k^2(k+1)=8k^2(k+1)$ (for sufficiently large $k$) and similarly $\Delta(W,d_i,c_j)\leq 8k^2(k+1)$, $i\in [k_2]$, so $\gamma=8k^2(k+1)$.\\\\
We will now argue that $0^+$-ls-PAV will execute a swap sequence of superpolynomial length on this modified instance. We claim $0^+$-ls-PAV will not execute the same swaps $\textbf{X}^1_{k_1}$ but instead \textit{shortcuts} this sequence of swaps. What we mean by this is that some subsequences of the form $(a_1,a_2)(a_2,a_3),\ldots, (a_{l-2},a_{l-1}),(a_{l-1},a_l)$ will be replaced by the new subsequence $(a_1,a_l)$.
We will write 
$$
(a_1,a_2)(a_2,a_3),\ldots, (a_{l-2},a_{l-1}),(a_{l-1},a_l)\xrightarrow{\text{replace}}(a_1,a_l).
$$
We will argue that this type of shortcutting does not occur too much, so that even with the lexicographic pivoting rule the number of iterations remains superpolynomial. 
Moreover, the intermediate committees that result from any prefix of the swap sequence thus obtained are a subset of the intermediate committees obtained in $\textbf{X}^1_{k_1}$.\\To obtain the new sequence of swaps we shortcut subsequences in $\textbf{X}^1_{k_1}$ as follows: \begin{center} We iterate from $l=k_1-1$ to $l=1$ and for a subsequence $\textbf{X}^0_l$ of $\textbf{X}^1_{k_1}$ update the sequence via $$\textbf{X}^0_l\xrightarrow{\text{replace}}(c_{l,t+1},c_{l,1})$$\end{center}
Proceeding this way, once we replace a subsequence $\textbf{R}=\textbf{X}^0_i$ with the single swap $(c_{i,t+1},c_{i,1})$, $\textbf{R}$ is not the subsequence of a sequence of the form $\textbf{X}^0_l$, $l>i$, as by design of the procedure $\textbf{X}^0_l$ was already replaced earlier in the process.
 We call the sequence obtained from $\textbf{X}^1_i$ by replacing sequences in this manner $\textbf{Z}^1_i$. Formally, 
 \begin{align*}
 \textbf{Z}^1_{i}=\oplus_{j=1}^{t}\left((c_{i,j},c_{i,j+1})\oplus\textbf{Z}_{i-1}^{j-1 \mod 2}\right)\text{ for }i>1
 \end{align*}
 where $\textbf{Z}_{i-1}^{0}=(c_{i,t+1},c_{i,1})$ and $\textbf{Z}_1^1=\textbf{X}^1_1$.
To conclude our argument, we need to show the following two items.
\begin{enumerate}[{(1)}]
\item $0^+$-ls-PAV executes $\textbf{Z}^1_{k_1}$ on $E$ when using better response with the fixed order $<$  and\label{item1}
\item $\textbf{Z}^1_{k_1}$ has superpolynomial length.\label{item2}
\end{enumerate}

For \Cref{item1}, note first that all swaps in $\textbf{Z}^1_{k_1}$ are of the form $(a,b)$ with $a,b\in C_i$ for some $i \in [k_1]$. Indeed, we previously showed that for any committee $W$ resulting from executing a prefix of $\textbf{X}^1_{k_1}$ any swaps of the form \begin{itemize}
\item 
$(c_{i},c_j)$, $c_i\in C_i$ and $c_j\in C_j, i\neq j$
\item 
$(d_i,c_j)$, $c_j\in C_j, i\neq j$,
\end{itemize}
do not increase the PAV score. Moreover, any committee $W$ that results from executing a prefix of $\textbf{Z}^1_{k_1}$ is also the resulting committee of some prefix of $\textbf{X}^1_{k_1}$,
and hence this also holds for our sequence $\textbf{Z}^1_{k_1}$.
It remains to prove that for $W$ resulting from the execution of some prefix of $\textbf{Z}^1_{k_1}$, the following holds: If the next swap is $(a,b), a,b\in C_i$, then no other swap $(c,d)$ that lexicographically precedes $(a,b)$ is a good swap. With this in mind, observe that for every swap $(a,b)$ in $\textbf{X}^1_{k_1}$ with $a,b\in C_i$ it holds that the committee $W$ resulting from the prefix in $\textbf{X}^1_{k_1}$ preceding $(a,b)$ is stable for any $N_j, j<i$. This stability implies that no swap $(c,d)$, $c,d\in C_j$, $j<i$, is good. $\textbf{Z}^1_{k_1}$ inherits this property since it inherits the intermediate committees from $\textbf{X}^1_{k_1}$. Furthermore, by construction of $\textbf{Z}^1_{k_1}$, $(a,b)$
is either $(c_{i,j},c_{i,j+1})$ for some $j\leq t$ or $(c_{i,t+1},c_{i,1})$.
By the above, the latter is clearly lexicographically minimal among good swaps. For the former observe that
since $(c_{i,j},c_{i,j+1})$ is good, this implies that $N_i$ is unstable with $i$ minimal, so either $i=k_1$ or else $i<k_1$ and $c_{i+1,p}\in W$ where $p$ is odd. So any swap $(c_{i,j},c_{i,l})$, $l<j$, is bad. This concludes the proof of \Cref{item1}.

\smallskip
To prove \Cref{item2}, we now show that $\textbf{X}^1_{k_1}$ still has length $\Omega(k^{\log k})$. Consider $\textbf{X}^1_{2}$ (which is not the subsequence of any $\textbf{X}^0_l$, $l>2$). 
Under the lexicographic pivoting rule, the sequence $\textbf{X}^0_1$ following a swap $(c_{2,j},c_{2,j+1})$ where $j$ is odd gets replaced with the single swap $(c_{1,t+1},c_{1,1})$, but at least half the swaps remain intact. So $|\textbf{Z}^1_2|\geq \frac{1}{2} |\textbf{X}^1_2|\geq \frac{t^2}{2}$. Similarly, $|\textbf{Z}^1_{i+1}|=|\oplus_{j=1}^{t}(c_{i,j},c_{i,j+1})\textbf{Z}_{i}^{j-1 \mod 2}|\geq \frac{t}{2}|\textbf{Z}^1_{i}|$ (remember that $t$ is even) so that $|\textbf{Z}^1_{k_1}|\geq \frac{t}{2}^{k_1-2}|\textbf{Z}^1_2|\geq \frac{t}{2}^{k_1-2}\frac{t^2}{2}=\Omega(\frac{t}{2}^{k_1})=\Omega(k^{k_1})$ since $t\geq k$.
 \end{proof}
\newpage

\section{Better Response vs Best Response: Empirical Analysis}\label{sec:exp}
 We have seen that, in the worst case, $0^+$-ls-PAV with a lexicographic pivoting rule may require a number of iterations that is super-polynomial in the committee size.
 Another natural pivoting rule for $0^+$-ls-PAV is to always consider all possible
 swaps and perform the swap that offers the maximum increase in the PAV score, i.e., 
 to pick a best response rather than simply a better response.
Intuitively, best response short-cuts the search by avoiding low-value swaps, 
and may therefore require fewer iterations. Indeed, we were unable to extend
out lower bound construction to the best-response version of $0^+$-ls-PAV, 
so it remains an open problem whether best response may take a superpolynomial number
of steps to converge.

On the other hand, best response necessarily considers $k(m-k)$ swaps in each iteration, while better response may be able to identify a sufficiently good swap after considering just a few candidate pairs. Thus, it is not clear which of these pivoting rules should be preferred in practice. In this section, we provide evidence from both real-world and synthetic data sets that supports the use of better response.

\subsection{Simulation Set-up}
We take a computational/data-driven approach and compare better and best response on data sets available on PrefLib\footnote{https://www.preflib.org/} as well as on synthetic data sets. As a proxy for running time, we consider the number of \textit{comparisons}---the number of evaluations of the quantity $\Delta(W, c, c')$---during the execution of each algorithm. For simplicity, in all our experiments, we choose a committee with the maximum number of approvals (i.e., $W\in\arg\max\sum_{v\in N}|A_v\cap W|$) as our starting point.
We begin by describing our two PrefLib data sets and subsequently move to describing the random approval sample models we use in our simulations.\\\\

\noindent\textbf{AAMAS 2015 / 2016 Bidding Data}
This pair of datasets contains the bids of reviewers over papers from the 2015 and 2016 editions of Autonomous Agents and Multi-agent Systems Conference. Inclusion in these datasets was explicitly opt-in. The 2015 dataset contains 9,817 bids of $201$ reviewers on $613$ papers; this represents about $40\%$ of the actual $22,360$ bids of $281$ reviewers over $670$ papers. The 2016 dataset contains $161$ out of $393$ reviewers with bids over $442$ out of $550$ papers.
The bidding language for these conferences consisted of four options for each paper: `yes', `maybe', `no', and `conflict'. 
We merge answers categories `yes' and `maybe' to get an approval vote, so a voter approves a paper if and only if she selected `yes' or `maybe'. The committee size is not part of the input, so we vary it from 1 to 10 or 30, depending on the instance size. The results of our simulations are displayed in Figure \ref{fig:aamas}.\\\\
\begin{figure*}[t!]
\centering
\begin{subfigure}{0.49\textwidth}
  \includegraphics[width=\linewidth]{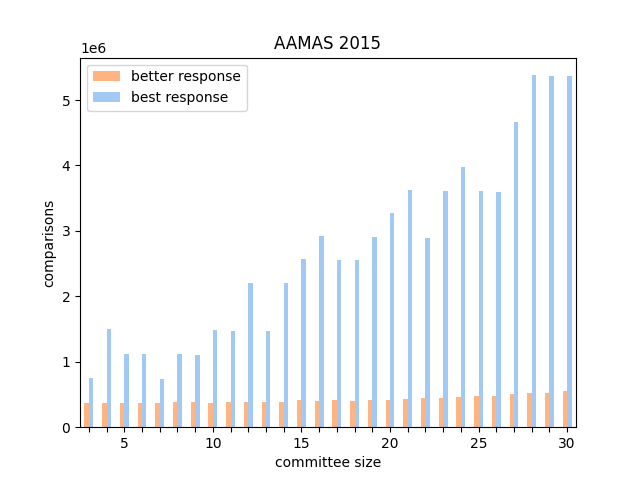}
   \caption{Number of swaps ls-PAV considers\\ on AAMAS 2015 bidding data}
  \label{fig:comps1}
\end{subfigure}
\hfill
\begin{subfigure}{0.49\textwidth}
  \includegraphics[width=\linewidth]{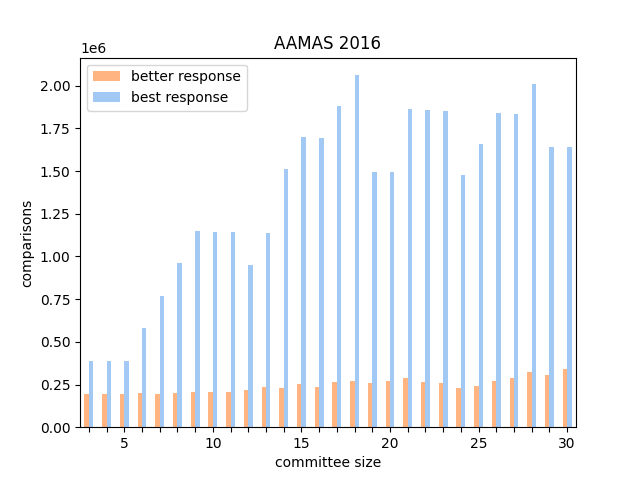}
 \caption{Number of swaps ls-PAV considers\\ on AAMAS 2016 bidding data}
  \label{fig:comps2}
\end{subfigure}

\caption{$0^+$-ls-PAV with best response vs.~$0^+$-ls-PAV with better response. On the x-axis, we vary the committee size, and on the y-axis we display the number of swaps considered. }\label{fig:aamas}
\end{figure*}

\noindent\textbf{Synthetic Data}
We generate synthetic data using three approval sampling models: $p$-Impartial culture, the $p$-$\phi$-resampling model \cite{SF+22} and the $d$-Euclidean model \cite{BL07}, as discussed in the following. In each case we generate elections with $100$ voters, $20$ candidates and a committee of size $k\in\{3,4,\ldots,10\}$, repeat this $10000$ times and display our results in a boxplot. We collect the results 
in Figure~5 (one parameter setting for each model); Figure 6 provides additional results for other parameter settings.

\smallskip

\noindent\textbf{$p$-Impartial culture} Each voter approves each candidate independently with probability $p$. In our experiments we use $p\in\{0.25,0.5,0.75\}$. The result for $p=0.5$ is displayed in Figure \ref{fig:synth1}, while the other cases can be found in Figure \ref{fig:syntheticapp}.

\smallskip

\noindent\textbf{$p$-$\phi$-Resampling}
For given $p,\phi \in [0,1]$, uniformly sample a central voter $u$ with $\lfloor{pm}\rfloor$ approvals.
We generate each new vote $v$ by initially setting $A_v=A_u$. We then execute the following procedure for every candidate $c \in C$: With probability $1-\phi$, we leave $c$'s approval intact. With the remaining probability $\phi$ we resample its value, by having $v$ approve $c$ with probability $p$ and not approve $c$ with probability $1-p$. 
We run experiments for the parameter settings $p=0.2,\phi=0.5$ (Figure \ref{fig:synth4}) and $p=0.5,\phi=0.5$ (Figure \ref{fig:app6}).

\smallskip

\noindent\textbf{$d$-Euclidean domain}
Each voter $v$ and candidate $c$ are associated with a position $p_v$ and $p_c$ in the $d$-dimensional unit cube $[0,1]^d$. We consider $d\in\{1,2\}$. 

We denote by $r>0$ the approval radius.
A voter at position $p_v$ approves a candidate at position $p_c$ if the Euclidean distance between them is less than $r$, i.e. $d_1(p_v,p_c)<r$ ($d_2(p_v,p_c)<r$ respectively).
For the $d$-dimensional Euclidean domain, we sample positions of voters and candidates uniformly from $[0,1]^d$.
Our parameter settings are as follows. For $d=1$ we consider $r=0.01$ (Figure \ref{fig:synth2}) and $r=0.1$ (Figure \ref{fig:app4}). For $d=2$ we consider $r=0.1$ (Figure \ref{fig:synth3}).

\begin{figure*}[t!]
\centering
  
\begin{subfigure}{0.49\textwidth}
  \includegraphics[width=\linewidth]{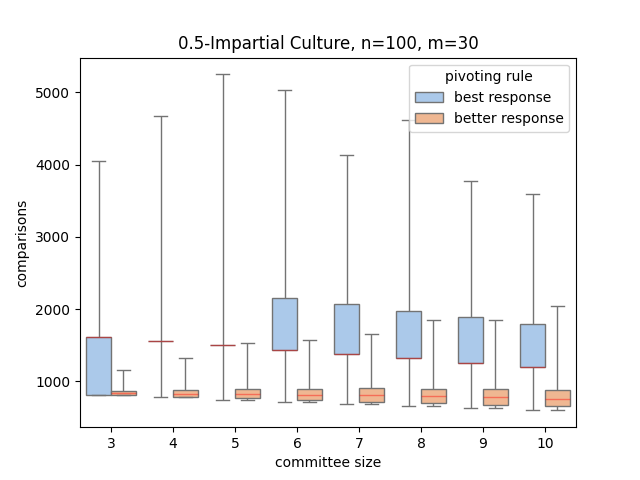}
   \caption{$p$-impartial culture for $p=0.5$}
  \label{fig:synth1}
\end{subfigure}
\hfill
\begin{subfigure}{0.49\textwidth}
  \includegraphics[width=\linewidth]{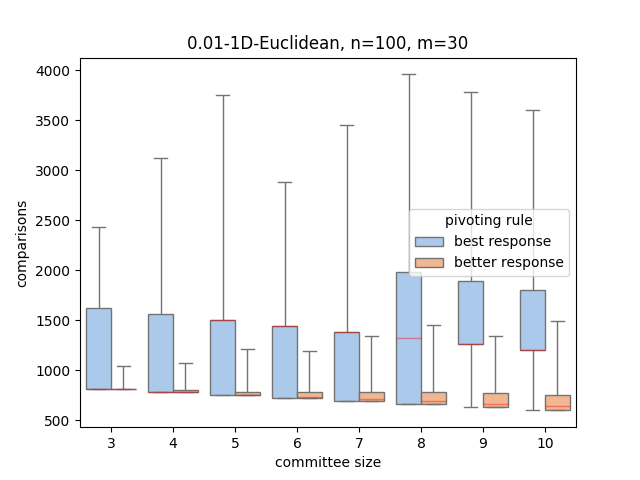}
 \caption{$1$-D Euclidean with radius $r=0.01$}
  \label{fig:synth2}
  
\end{subfigure}
\hfill
\begin{subfigure}{0.49\textwidth}
  \includegraphics[width=\linewidth]{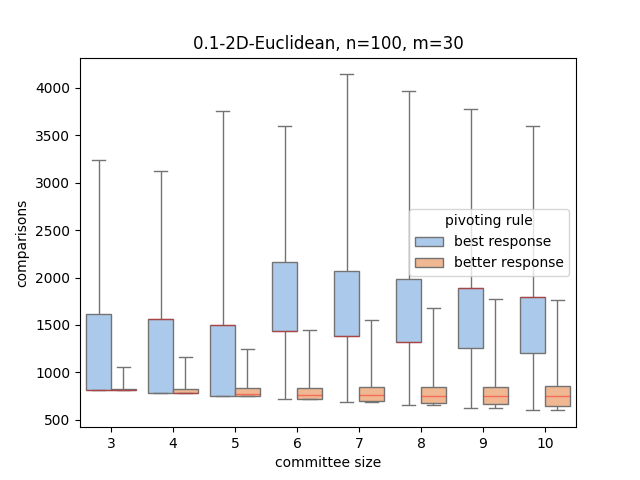}
 \caption{$2$-D Euclidean with radius $r=0.1$}
  \label{fig:synth3}
  
\end{subfigure}
\hfill
\begin{subfigure}{0.49\textwidth}
  \includegraphics[width=\linewidth]{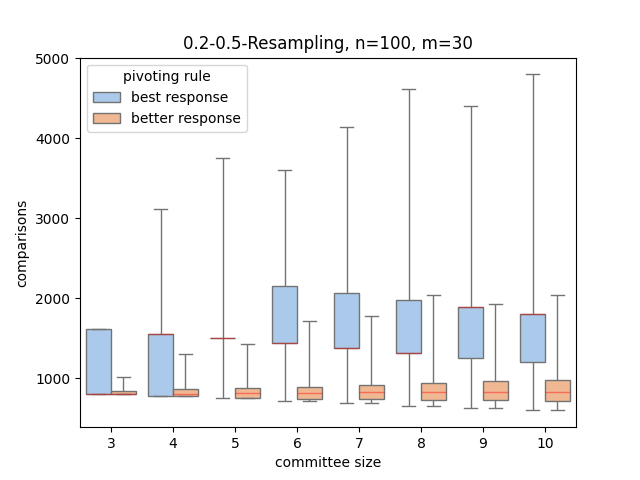}
 \caption{$p$-$\phi$-Resampling for $p=0.2$ and $\phi=0.5$}
  \label{fig:synth4}
  
\end{subfigure}
\caption{$0^+$-ls-PAV with best response vs.~$0^+$-ls-PAV with better response. On the x-axis, we vary the committee size. On the y-axis we display median, maximum, minimum, 25th and 75th quantile of the number of comparisons over 10000 repetitions on the described random model with $n=100$ voters and $m=30$ candidates in standard boxplot format. }\label{fig:synthetic}
\end{figure*}
\subsection{Takeaway}
Across our experiments we observe that better response outperforms best response. As we see in Figure~\ref{fig:comps1}, on the AAMAS 2015 data set, the number of swaps considered by the best response is up to $6$ times larger than the number of comparisons made by the better response. Further, as we increase the committee size from $3$ up to $30$, for better response the increase (if any) in the number of swaps considered is very slow.
In contrast, for the best response the number of comparisons increases over $5$-fold in both AAMAS datasets. 
For our synthetic data sets we observe the same trend, although the difference between better and best response does not appear to be quite as stark due to smaller election sizes. As can be observed in Figures~\ref{fig:synthetic} and~\ref{fig:syntheticapp}, across our simulations best response requires nearly twice as many comparisons half of the time and the maximum number of comparisons observed is usually as least twice as large compared as to better response. We note that various other parameter settings we tested followed the same pattern.\begin{figure}[h]
\centering
\begin{subfigure}{0.49\textwidth}
  \includegraphics[width=\linewidth]{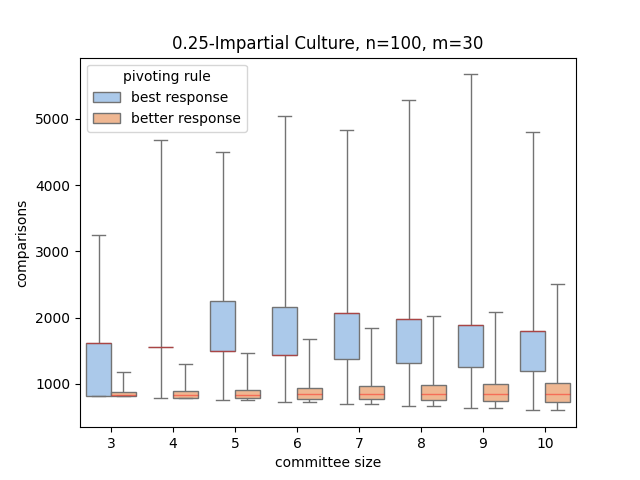}
 \caption{$p$-impartial culture for $p=0.25$}
  \label{fig:app1}
\end{subfigure}
\hfill
\begin{subfigure}{0.49\textwidth}
  \includegraphics[width=\linewidth]{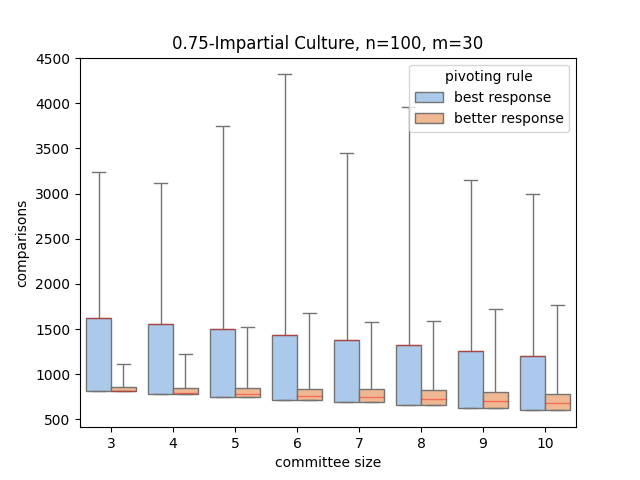}
 \caption{$p$-impartial culture for $p=0.75$}
  \label{fig:app2}
  
\end{subfigure}

\hfill
\begin{subfigure}{0.49\textwidth}
  \includegraphics[width=\linewidth]{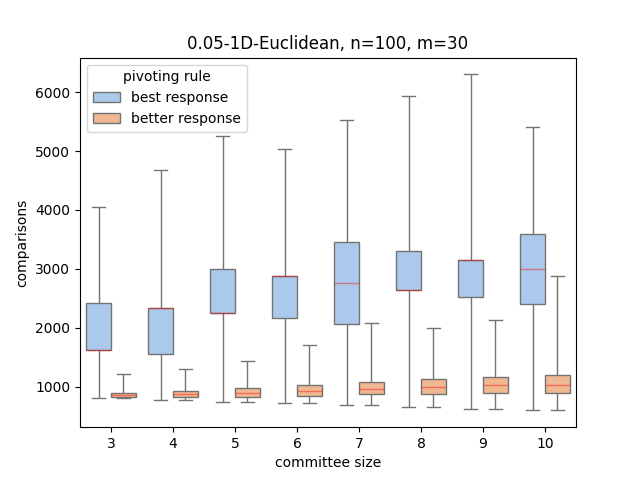}
 \caption{$1$-D Euclidean with radius $r=0.05$}
  \label{fig:app4}
  
\end{subfigure}
\hfill
\begin{subfigure}{0.49\textwidth}
  \includegraphics[width=\linewidth]{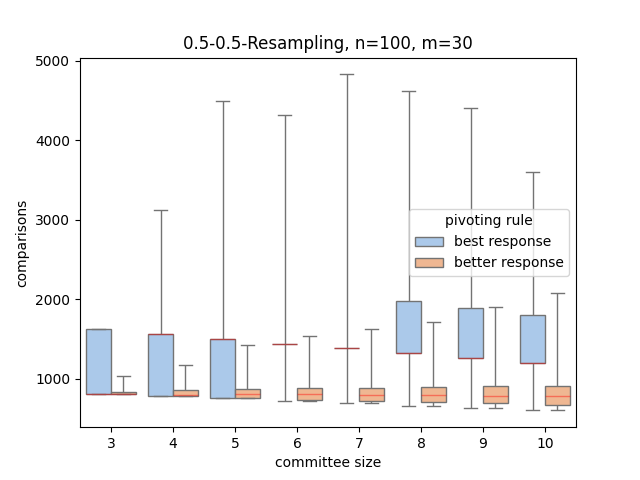}
 \caption{$p$-$\phi$-Resampling for $p=0.5$ and $\phi=0.5$}
  \label{fig:app6}

\end{subfigure}
  
\caption{$0^+$-ls-PAV with best response vs.~$0^+$-ls-PAV with better response. On the x-axis, we vary the committee size. On the y-axis we display median, maximum, minimum, 25th and 75th quantile of the number of comparisons over 10000 repetitions on the described random model with $n=100$ voters and $m=30$ candidates in standard boxplot format. }\label{fig:syntheticapp}
\end{figure}

\end{document}